\author{
  Maurice~H.~ter~Beek\inst{1}\orcidID{0000-0002-2930-6367} \and
  Guillermina~Cledou\inst{2}\orcidID{0000-0003-0006-6440} \and
  Rolf~Hennicker\inst{3}\and
  Jos\'{e}~Proen\c{c}a\inst{4}\orcidID{0000-0003-0971-8919}
}
\institute{
  ISTI–CNR, Pisa, Italy,
    \email{maurice.terbeek@isti.cnr.it}
  \and
  HASLab, INESC TEC \& University of Minho, Portugal,
    \email{mgc@inesctec.pt}
  \and
  Ludwig-Maximilians-Universit\"{a}t M\"{u}nchen, Munich, Germany
    \and
  CISTER, ISEP, Polytechnic Institute of Porto, Portugal,
    \email{pro@isep.ipp.pt}
}
\authorrunning{ter Beek, Cledou, Hennicker, Proen\c{c}a}
\begin{document}

\title{Featured Team Automata}
\maketitle



\begin{abstract}
We propose featured team automata
to support variability in the development and analysis of teams, which are systems of reactive components that communicate according to specified synchronisation types.
A featured team automaton concisely describes a family of concrete product models
for specific configurations 
determined by feature selection.
We focus on the analysis of communication-safety properties, but doing so product-wise 
quickly becomes impractical.
Therefore, we investigate how to lift 
notions of receptiveness (no message loss)
to the level of family models.
%
We show that featured (weak) receptiveness of featured team automata
characterises 
(weak) receptiveness for all product instantiations.
A prototypical tool supports the developed theory.
\end{abstract}

\section{Introduction} 
\label{sec:introduction}


Team automata, 
originally introduced in the context of computer supported cooperative work 
to model groupware systems~\cite{Ell97}, 
are formalised as a theoretical framework to study  synchronisation mechanisms in system models~\cite{BEKR03}. Team automata represent an extension of I/O automata~\cite{BK05}. Their distinguishing feature 
is the loose nature of synchronisation according to which, in principle, any number of component automata can participate in the synchronised execution of a shared communicating action,
either as a sender or as a receiver. Team automata can determine specific synchronisation policies defining when and which actions are executed and by how many components.
%
Synchronisation types 
classify the 
policies realisable in team automata (e.g., peer-to-peer or broadcast communication) in terms of ranges for the number of sender and receiver components that can participate in a synchronisation~\cite{BCHK17}.
In extended team automata (\ETA)~\cite{BHK20b}, 
synchronisation type specifications (\STS) individually assign a synchronisation type to each communicating action. 
Such a specification uniquely determines a \emph{team} and 
gives rise to communication requirements to be satisfied by the team.

For systems composed by components communicating via message exchange, it is 
desirable to guarantee absence of 
communication failures, like message loss (typically output not received as input, violating receptiveness) or indefinite waiting (typically for input that never arrives, violating responsiveness). This requires knowledge of the synchronisation policies to establish the compatibility of communicating components~\cite{AH01,CC02,LNW07}; 
for team automata this was first studied for full synchronous products of component automata in~\cite{CK13}. 
Subsequently, a generic procedure to derive requirements for receptiveness and responsiveness for each synchronisation type was defined, and communication-safety of 
(extended) team automata
was expressed in terms of compliance with such 
requirements~\cite{BCHK17,BHK20b}. A team automaton 
is called compliant with a given set of communication requirements if in each reachable state 
the requirements are met (i.e.\ the communication is safe). If the required communication cannot occur immediately, but only after some arbitrary other actions have been executed, the team automaton is called weakly compliant
(akin to weak compatibility~\cite{BMSH10,HB18} or agreement of lazy request actions~\cite{BBDLFGD20}).
%

Many of today's software systems are highly configurable, variant-rich systems, developed as a software product line (SPL) with a notion of variability 
in terms of features that conceptualise pieces of system functionality or aspects that are relevant to the stakeholders~\cite{ABKS13}. Formal models of SPL behaviour are studied extensively. Such variability-rich behavioural models are often based on the super\-imposition of multiple product models in a single family model, equipped with feature-based variability such that each product model corresponds to a different configuration. Arguably the best known models are featured transition systems (\FTSs)~\cite{CHSLR10,Cla11,CCSHLR13}
and modal transition systems~\cite{FUB06,FG07}, possibly with variability constraints~\cite{ABFG11b,BFGM16}, but also I/O automata~\cite{LNW07,LPT09}, Petri nets~\cite{MPC11,MPC16} and contract automata~\cite{BGGDF17,BBDLFGD20} have been equipped with variability.
An \FTS is a labelled transition system (\LTS) whose transitions are 
annotated with feature expressions that are Boolean expressions over features, which condition the presence of transitions in product models, and a feature model, which 
determines the set of valid product models (configurations) of the family model.
The analysis of family models is challenging due to their innate variability, since the number of possible product models may be exponential in the number of features. 
In particular for larger models, enumerative product-by-product analysis becomes unfeasible; thus, dedicated family-based analysis techniques and tools, which exploit variability in terms of features, have been developed~\cite{CHSLR10,DS12,TSHA12,TAKSS14,DABW17,BVW17,CDKB18,CJJK19,BLVW20,Dim20}.

\paragraph{\rm\bf Motivation}

\FTSs have mostly been studied in the context of families of configurable components. Less attention has been paid on their parallel execution, in particular in the context of systems of reactive, concurrently running components,
where interaction is a crucial issue, often realised by message exchange.
For this, we need i)~to 
discriminate between senders and receivers and thus between input and output actions in \FTSs, and ii)~a flexible synchronisation mechanism, not necessarily peer-to-peer, for sets of \FTSs, called (featured) systems. In particular, the type of synchronisation should remain variable, depending on selected features (products).
Important questions for analysis of such systems concern behavioural compatibility (communication-safety).
As mentioned above, compositionality and communication-safety have been studied extensively in the literature for a variety of formal (automata-based) models, but---to the best of our knowledge---not considering variability.
Thus, we need a means to define and verify communication-safety for systems of \FTSs, ideally performing analyses on the level of featured systems such that the respective properties are automatically guaranteed for any product instantiation.
In this paper, we focus on the property of (weak) receptiveness.

\paragraph{\rm\bf Running Example} 

We consider a configurable access management system consisting of a server and users who can either login with secure authentication or without (open access). Concrete
automata 
capturing user and server behaviour are specified as 
family models whose product models correspond to configurations with or without secure authentication.

\cref{fig:fca-ex} shows two \FTSs: a family model of user components (\cref{fig:fca-ex-u}) and a family model of server components (\cref{fig:fca-ex-s}), as well as a feature model $\fm = \Fe{\lock \xor \unlock}$. 
The feature model expresses an exclusive choice of two features,
$\Fe{\lock}$ and $\Fe{\unlock}$, representing access with or without secure authentication, respectively, and defines two valid products (sets of features): $\{\Fe{\lock}\}$ and $\{\Fe{\unlock}\}$.
The idea is that the server must confirm login access only for secure authentication.
Thus, each transition is annotated with a constraint, denoted by a feature expression in square brackets (e.g., $\trFe{\lock[]}$), to indicate the product(s) that allow 
this transition.

A user starts in the initial state $0$, indicated by the incoming arrow, in which only the action \mi{join!} can be executed. Depending on the specific product, this results in a move to state $1$ (if feature \Fe{\lock} is present) or to state $2$ (if \Fe{\unlock} is present). From state $2$, a user can move back to state $0$ by executing action \mi{leave!}, in either product, as enabled by the transition constraint \trFe{\verum} (denoting truth value \kw{true}).
In state $1$, which is only present for 
the product with secure authentication, the user waits for explicit confirmation of login access from the server.


\cref{fig:projected-fca-u1,fig:projected-fca-u2} show the \LTSs representing the user product models, which result from projecting the user \FTS in \cref{fig:fca-ex-u} onto its set of valid products.
Similarly, \cref{fig:projected-fca-s1,fig:projected-fca-s2} show the \LTSs of the server product models, projecting the server \FTS onto its two valid products.

\begin{figure}[!hb]
\centering
\begin{subfigure}[c]{.55\textwidth}
\centering
\begin{tikzpicture}[align=center, every node/.style={scale=.9,font=\footnotesize}]
  \node[loc](0){$0$};
  \node[loc,right=2 of 0](2){$2$};
  \coordinate(mid) at ($(0)!.5!(2)$);
  \node[loc,above=0.3 of mid] (1) {$1$};
  \path[->] 
    (0) edge[bend left=25,above left] node{\trFe{\lock[]} \mi{join!}} (1)
    (0) edge[below] node{\trFe{\unlock[]} \mi{join!}} (2)
    (1) edge[bend left=25,above right] node{\trFe{\lock[]} \mi{confirm?}} (2)
    (2) edge[bend left=55,below] node{\trFe{\verum} \mi{leave!}} (0);
    \node[right=1.2 of 2] (fm) {\shl{$\fm = \Fe{\lock[\textcolor{black!80}]\xor \unlock[\textcolor{black!80}]}$}};
  \coordinate[left=0.3 of 0](initial);
  \draw[->,thick] (initial) -> (0);
\end{tikzpicture}
\vspace*{.75\baselineskip}
\caption{$\user$ser\qquad\qquad\qquad\qquad}
\label{fig:fca-ex-u}
\end{subfigure}
\quad
\begin{subfigure}[c]{.4\textwidth}
\centering
\begin{tikzpicture}[align=center, every node/.style={scale=.9,font=\footnotesize}]
  \node[loc](0){$0$};
  \node[loc,right=2 of 0](1){$1$};
  \coordinate(mid) at ($(0)!.5!(1)$);
  \path[->] 
    (0) edge[loop,out=135,in=225,looseness=5,left] node{\trFe{\unlock[]} \mi{join?}} (0)
    (0) edge[loop,out=45,in=-45,looseness=5,right] node{\trFe{\verum} \mi{leave?}} (0)
    (0) edge[bend left=55,above] node{\trFe{\lock[]} \mi{join?}} (1)
    (1) edge[bend left=55,below] node{\trFe{\lock[]} \mi{confirm!}} (0);
  \coordinate[above=0.3 of 0](initial);
  \draw[->,thick] (initial) -> (0);
\end{tikzpicture}
\caption{$\server$erver}
\label{fig:fca-ex-s}
\end{subfigure}
\caption{Family models of users $\user$ and servers $\server$ and a shared feature model \fm}
\label{fig:fca-ex}
\centering
\begin{subfigure}[t]{.35\textwidth}
\centering
\begin{tikzpicture}[align=center, every node/.style={scale=.9,font=\footnotesize}]
  \node[loc](0){$0$};
  \node[loc,right=1.5 of 0](2){$2$};
  \coordinate(mid) at ($(0)!.5!(2)$);
  \node[loc,above=0.3 of mid] (1) {$1$};
  \path[->] 
    (0) edge[bend left,above left] node{\mi{join!}} (1)
    (1) edge[bend left,above right] node{\mi{confirm?}} (2)
    (2) edge[below] node{\mi{leave!}} (0);
  \coordinate[left=0.3 of 0](initial);
  \draw[->,thick] (initial) -> (0);
\end{tikzpicture}
\caption{$\user\proj_\Fe{\lock}$}
\label{fig:projected-fca-u1}
\end{subfigure}
\begin{subfigure}[t]{.25\textwidth}
\begin{tikzpicture}[align=center, every node/.style={scale=.9,font=\footnotesize}]
  \node[loc](0){$0$};
  \node[loc,right=1.5 of 0](1){$1$};
  \path[->] 
    (0) edge[loop,out=35,in=-35,looseness=4,right] node{\mi{leave?}} (0)
    (0) edge[bend left=45,above] node{\mi{join?}} (1)
    (1) edge[bend left=45,below] node{\mi{confirm!}} (0);
  \coordinate[above=0.3 of 0](initial);
  \draw[->,thick] (initial) -> (0);
\end{tikzpicture}
\caption{$\server\proj_\Fe{\lock}$}
\label{fig:projected-fca-s1}
\end{subfigure}
\begin{subfigure}[t]{.25\textwidth}
\centering
\begin{tikzpicture}[align=center, every node/.style={scale=.9,font=\footnotesize}]
  \node[loc](0){$0$};
  \node[loc,right=1.5 of 0](2){$2$};
  \path[->] 
    (0) edge[below] node{\mi{join!}} (2)
    (2) edge[bend left=55,below] node{\mi{leave!}} (0);
  \coordinate[above=0.3 of 0](initial);
  \draw[->,thick] (initial) -> (0);
\end{tikzpicture}
\caption{$\user\proj_\Fe{\unlock}$}
\label{fig:projected-fca-u2}
\end{subfigure}
\begin{subfigure}[t]{.125\textwidth}
\begin{tikzpicture}[align=center, every node/.style={scale=.9,font=\footnotesize}]
  \node[loc](0){$0$};
  \path[->] 
    (0) edge[loop,out=45,in=135,looseness=5,above] node{\mi{join?}} (0)
    (0) edge[loop,out=225,in=-45,looseness=5,below] node{\mi{leave?}} (0);
  \coordinate[left=0.3 of 0](initial);
  \draw[->,thick] (initial) -> (0);
\end{tikzpicture}
\caption{$\server\proj_\Fe{\unlock}$}
\label{fig:projected-fca-s2}
\end{subfigure}
\caption{Product models of users and servers (projections of the models in \cref{fig:fca-ex})
}
\label{fig:projected-fca}
\end{figure}

\begin{figure}[!ht]
	\centering
	\begin{tikzpicture}[x=2.50cm, y=-1.4cm]
		\tikzstyle{obj} = [inner sep=0pt, outer sep=1mm,font=\small]
		\tikzstyle{arr} = [inner sep=0pt, -stealth]
		\tikzstyle{lbl} = [inner sep=2pt, font=\scriptsize]
    \tikzstyle{slbl}=[below,inner sep=8pt,font=\sf\smaller]
    \tikzstyle{sqgl}=[arr,decoration={snake,amplitude=0.7pt,segment length=1.2mm,
                      post=lineto,post length=3pt},decorate] 

		\node [obj] (FCA) at (0.05,0) {$A\,{:}\,\mFCA$};
		\node [obj] (FS) at (1.1,0) {$\S {=} (A_i)_{i\in\N}{:}\,\textit{\FSys}$ };
		\draw [arr] (FCA) to node [lbl, above=1.0mm] {} (FS);
		
		\node [obj] (CA) at (0.05,1) {$A\proj_p\,{:}\,\mi{\CA}$};
		\node [obj] (S) at (CA-|FS) {$\S\proj_p {=}(A_i\proj_p)_{i\in\N}{:}\, \textit{Sys}$ };
		\draw [arr] (CA) to node [lbl, above=1.0mm] {} (S);
		
		\node [obj] (FETA) at (2.75,0) {$\mFETA\times f\ensuremath{\mkern-3mu}Reqs\xspace$};
		\node [obj,xshift=-0pt] (ETA) at (CA-|FETA) {$\mi{\ETA}\times\REQs$};

    \node [obj,right=0.2 of FETA,align=center] (frec)
      {\smaller featured\\[-4pt]\smaller (weakly)\\[-4pt]\smaller receptive};
    \node [obj,align=center]at(ETA-|frec) (rec)
      {\smaller (weakly)\\[-4pt]\smaller receptive};
    \draw [stealth-stealth,double] (frec) to (rec);
    \draw [slbl,inner sep=2pt,draw=none]
          (frec.center) to node[yshift=0pt,right]{Thm.\,\ref{thm:frec-to-rec}} (rec.center);

		\draw[arr] (FS) to node[lbl,below,xshift=3pt] (fst) {$\fstype{:}\mFSTS$} (FETA);
		\draw[arr] (S)  to node[lbl,above,xshift=-2pt] (st) {$\stype{:}\textit{\STS}$}  (ETA);

		\draw[arr] (FCA)  to node[lbl,right,xshift=-1pt] {$\proj_p$}  (CA);
		\draw[arr] (FS)   to node[lbl,right,xshift=-1pt] {$\proj_p$}  (S);
		\coordinate (fetasw) at ($(FETA.south west)+(15pt,0)$);
		\draw[arr] (fetasw) to node[lbl,right,xshift=-1pt] {$\proj_p$}  (ETA.north-|fetasw);
		\draw[arr] (fst)  to node[lbl,right,xshift=-1pt] {$\proj_p$}  (st.north-|fst);

    \draw[sqgl] (FETA) to (frec);
    \draw[sqgl] (ETA) to (rec);

    \begin{scope}[on background layer]
      \tikzstyle{bg}=[draw=none]
      \coordinate (case) at ($(CA.south east)-(2pt,0)$);
      \coordinate (fetanw) at ($(FETA.north west)-(4pt,0)$);
      \coordinate (etaL) at ($(ETA.center)-(8pt,0)$);
      \coordinate (etaR) at ($(ETA.center)+(6pt,0)$);
      \coordinate (stL) at ($(st.west)+(1pt,0)$);
      \coordinate (SR) at ($(FS.east)+(2pt,0)$);
      \coordinate (SL) at ($(FS.west)+(-2pt,0)$);
      \coordinate (frecR) at ($(frec.east)+(6pt,0)$);
      \coordinate (recR) at ($(rec.east)+(4pt,0)$);
      \node[fit=(FCA.north west|-FETA.north)(case|-ETA.south),
            fill=yellow!30,draw=yellow!75!gray,bg](comp){};
      \node[fit=(FETA.north-|S.center)(SL)(SR)(ETA.south-|S.center),
            fill=orange!18,draw=orange!75!gray,bg](sys){};
      \node[fit=(stL)(fetanw)(ETA.south west)(etaL),
            fill=red!15,draw=red!75!gray,bg](team){};
      \node[fit=(etaR)(FETA.north east|-FETA.north)(frecR)(rec.south east|-ETA.south),
            fill=purple!15,draw=purple!75!gray,bg](req){};
      \node[slbl]at(comp.south){Components (§\ref{sub:fca})};
      \node[slbl]at(sys.south){Systems (§\ref{sub:fsystems})};
      \node[slbl]at(team.south){Teams (§\ref{sub:featured_eta},\,Thm.\,\ref{thm:commuting})};
      \node[slbl]at(req.south){Receptiveness (§\ref{sec:receptiveness})};
      \node[fit=(FCA)(FETA)(frecR),inner sep=4pt,draw=black!30,thick,dotted]{};
      \node[fit=(CA.north-|FCA.west)(ETA.south)(recR),inner sep=4pt,draw=black!30,thick,dotted]{};
      \node[lbl,left,align=right](other)at(CA.west){Previous\,\\work~\cite{BHK20b}:\,};
      \node[lbl,left,align=right](this)at(other.east|-FCA){Current\,\\paper:\,};
    \end{scope}

	\end{tikzpicture}
	\vspace*{-1.5\baselineskip}
	\caption{\label{fig:overview}Overview of this paper, using a valid product $p$}
 
\vspace*{-\baselineskip}
\end{figure}
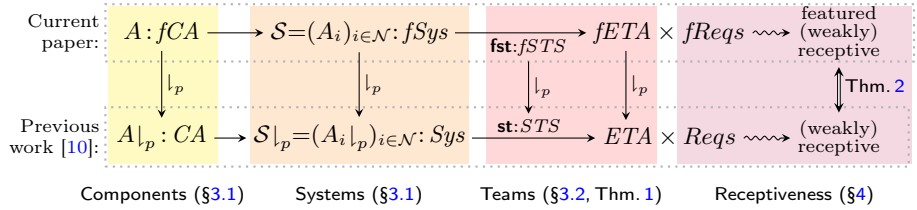

\paragraph{\rm\bf Contribution} 

\cref{fig:overview} 
illustrates the contents and contributions of this paper, which we now explain and relate to the literature mentioned above. In particular, we 
extend~\cite{BHK20b}, by enriching \ETA with variability, proposing a new model called \emph{featured} ETA (\FETA) to allow the specification of---and reasoning on---a family of \ETA parameterised by a set of features. We define \emph{projections}
$\proj_p$ (for any valid product $p$) to relate the featured setting of this paper to that without in~\cite{BHK20b}.

First, we extend \emph{component automata} (\CA), the building blocks of (extended) team automata, with variability, resulting in \FTSs with input and output actions, called \emph{featured} \CA (\FCA). Basically, \CA are \LTSs that distinguish between input and output actions (and internal actions, omitted in this paper) and which capture the behaviour of a component. The \FTSs in the running example are \FCA in which input and output actions
are appended by $!$ and $?$, respectively.
Multiple \CA can run in parallel to form a \emph{system} (\textit{Sys} in \cref{fig:overview}) of the \CA; we propose a \emph{featured} \Sys (\FSys) to consist of 
\FCA instead of \CA.

Given a \Sys and a \emph{synchronisation type specification} (\STS), it is possible to generate an \ETA and derive \emph{receptiveness requirements} (\REQs), and study whether the \ETA is (weakly) compliant with all such \REQs, in which case it is called \emph{(weakly) receptive}.
An \ETA is an \LTS that restricts how \CA in the \Sys can communicate based on the \STS.
We propose a \emph{featured} \STS (\FSTS) to parameterise an \STS with variability, giving rise to the aforementioned \FETA and \emph{featured} \REQs (\FREQs).
If the \FETA is featured (weakly) compliant with all such \FREQs, it is called \emph{featured (weakly) receptive}.

While the extension from \CA to \FCA (and from systems to featured systems) is rather straightforward, \FETA are not simple extensions of \ETA: the \FSTSs giving rise to \FETA are a nontrivial extension of the \STSs for \ETA, partially due to the variability in synchronisation types.
Our first result (\Cref{thm:commuting}) confirms the soundness of our extension.
Our main result (\Cref{thm:frec-to-rec}) is that featured (weak) receptiveness induces and reflects
(weak) receptiveness of product models, i.e.\
a \FETA is featured (weakly) receptive if and only if
all \ETA obtained by product projections are (weakly) receptive. 

\paragraph{\rm\bf Outline}
%
\cref{sec:preliminaries} provides some basic definitions concerning variability. 
\cref{sec:featured_team_automata} lifts the theory of team automata to that of featured team automata, and \cref{sec:receptiveness} does the same for receptiveness 
requirements and compliance. We present a prototypical implementation of the developed theory in \cref{sec:implementation}, and \cref{sec:conclusion_and_discussion} concludes the paper 
and provides some ideas for future work.
%
The proofs of our results can be found in~\cref{sec:proofs}. 

\section{Variability} 
\label{sec:preliminaries}

This section provides 
definitions of the basic notions concerning variability, viz.\ 
\emph{features}, \emph{feature expressions}, \emph{feature models}, 
and 
\FTSs. 


A \emph{feature}, ranged over by $f$, is regarded as a Boolean variable that represents a unit of 
variability. This paper assumes a finite set of features $F$.
A \emph{product}, ranged over by $p\subseteq F$, is a finite subset of selected features.
In the context of SPLs, 
a product can be interpreted as a configuration used to derive concrete software systems.
A \emph{feature expression} $\psi$ over a set of features $F$, denoted $\psi\in \mi{FE}(F)$, 
is a Boolean expression over features with the usual Boolean connectives and constants \verum and \falsum interpreted by the truth values \kw{true} and \kw{false}.
A product $p$ satisfies a feature expression $\psi$, denoted $p\models \psi$, if and only if $\psi$ is evaluated to \verum
if \verum is assigned to every feature in $p$ and \falsum to the features not in $p$. A feature expression $\psi$ is \emph{satisfiable} if there exists a product $p$
such that $p\models \psi$.
A \emph{feature model} $\fm\in \mi{FE}(F)$ is a feature expression that determines 
the set of products for which concrete systems of an SPL
can be derived.
We use $\sem{\fm}$
to denote the 
set of products that satisfy the feature model $\fm\in \mi{FE(F)}$. 


\myparagraph{Notation.}
For any product 
$p \subseteq F$, 
its view as a feature expression is 
$\Chi{p} = \bigwedge_{f\in p} f \land \bigwedge_{f\in F\backslash p} \lnot f$.
$p$ is the unique product with $p\models \Chi{p}$.
A set $P$ of products is characterised by the feature expression
$\Chi{P} = \bigvee_{p\in P} \Chi{p}$. Clearly, for any product $p$, 
$p \in P$ iff $p \models \Chi{P}$. Note that the conjunctions and disjunctions are finite, since $F$ is finite. Moreover, $\bigwedge_{i \in \emptyset}\psi_i$ stands for $\verum$ and $\bigvee_{i \in \emptyset}\psi_i$ stands for $\falsum$.

\smallskip


A \emph{featured transition system} (\FTS) is a tuple $A\,{=}\,(Q,I,\Sigma,E,F,\fm,\gamma)$
such that $(Q,I,\Sigma,E)$ is an \LTS 
with  
a finite set of states $Q$, 
a set of initial states $I\subseteq Q$,
a finite set of actions $\Sigma$, and  
a transition relation $E\subseteq Q\times \Sigma \times Q$.
$F$ is a finite set of features, $\fm \in \FE(F)$ is a feature model
and $\gamma: E\to \FE(F)$ is a mapping assigning feature expressions to
transitions. 
A product $p \subseteq F$ is \emph{valid} for the feature model $\fm$,
if $p \in \sem{\fm}$.
The mapping $\gamma$ expresses \emph{transition constraints} for the realisation of transitions. 
A transition $t \in E$ is \emph{realisable} for a valid product $p$
if $p \models \gamma(t)$.

An \FTS $A$ can be \emph{projected} to a valid product $p$
by using $\gamma$ to filter realisable transitions,
resulting in
the LTS $A\proj_p = (Q,I,\Sigma,E\proj_p)$, where
$E\proj_p = \{t\in E \mid$ $p \models \gamma(t)\}$.
Such a projection is also called product model or configuration.
Hereafter, we will generally write projections using superscripts, e.g. $A^p$ to denote~$A\proj_p$.
%


\myparagraph{Notation.}
Given an LTS or an \FTS $A$, we write 
$q\tr{a}_A q'$, or shortly $q\tr a q'$, to denote $(q,a,q') \in E$. 
For $\Gamma\subseteq\Sigma$, we write $q\tr{\Gamma}\!\!^{*}\,q'$ if there exist 
$q\tr{a_1}q_1\tr{a_2}\cdots\tr{a_n}q'$ 
for some $n\geq0$ and $a_1,\dots,a_n\in \Gamma$.
%
%
An action $a$ is \emph{enabled} in $A$ at state $q\in Q$, 
denoted $a\,\bkw{en}_A@q$, if there exists $q'\in Q$ such that $q \tr{a} q'$. 
A state $q\in Q$ is \emph{reachable} if
$q_0\tr{\Sigma}\!\!^*\,q$ for some $q_0 \in I$.



\section{Team Automata with Variability} 
\label{sec:featured_team_automata}

This section proposes 
to integrate variability in the
modelling of teams of reactive components which communicate according
to specified synchronisation policies.
For this purpose we define 
\emph{featured \CA}, \emph{featured \Syss}, and \emph{featured \ETA}, and
compare them to their featureless counterparts.
Throughout this section we will use \shl{grey backgrounds} to highlight extensions with features.


\subsection{Featured Component Automata and Featured Systems} 
\label{sub:fca}\label{sub:fsystems}

A \emph{featured component automaton} (\FCA) is an \FTS $A = (Q,I,\Sigma,E,\mshl{F},\mshl{\fm},\mshl{\gamma})$
such that $\Sigma=\Sigma^?\uplus\Sigma^!$ 
consists of 
disjoint sets $\Sigma^?$ of \emph{input actions}
and $\Sigma^!$ of \emph{output actions}. For simplicity, we do not consider
internal actions here.
For easier readability, input actions will be shown 
with suffix ``$?$'' and output actions with suffix ``$!$''. 
\FCA extend \emph{component automata} (\CA)~\cite{BEKR03,BHK20b} with features and feature models.
The running example in~\Cref{sec:introduction} contains examples of \FCA.

A \emph{featured system} (\FSys) is a pair $\S = (\N,(A_i)_{i\in \N})$, where 
$\N$ is a finite, nonempty set of component names and 
$(A_i)_{i\in \N}$ is an $\N$-indexed family of \FCA $A_i = (Q_i,I_i,\Sigma_i,E_i,\mshl{F},\mshl{\fm},\mshl{\gamma_i})$ over a shared set of features $\mshl{F}$ and feature model $\mshl{\fm}$. 
Composition of feature models is out of the scope of this paper, but note that multiple approaches exist in the literature, e.g., using conjunction or disjunction of feature models~\cite{SHTB06,Cla11,CCSHLR13}.

%
Featured systems extend \emph{systems of \CA}~\cite{BHK20b} by using \FCA instead of \CA as system components.
%
%
An \FSys 
$\S = (\N,(A_i)_{i\in \N})$ induces:
the set of system states $Q=\prod_{i\in\N} Q_i$ 
such that, for any $q\in Q$ and for all $i\in\N$, $q_i\in Q_i$; 
the set of initial states $I=\prod_{i\in\N} I_i$;
the set of system actions $\Sigma=\bigcup_{i\in\N} \Sigma_i$;
%
%
the set of system labels
$\Lambda \subseteq \pset{\N}\times \Sigma \times \pset{\N}$
defined as $\Lambda = \{(S,a,R) \mid \emptyset \neq S\cup R \subseteq \N,\allowbreak
\forall_{i\in S}\cdot a\in \Sigma^!_i,
\forall_{i\in R}\cdot a\in \Sigma^?_i\}$;
and the set of system transitions
$E \subseteq Q\times \Lambda \times Q$ 
defined as 
$E = \{q\tr{(S,a,R)}q' \mid 
\forall_{i\in(S\cup R)} \cdot q_i\tr{a}_{A_i} q'_i, ~
\forall_{j\in\N\setminus (S\cup R)} \cdot q_j=q_j' \}$.
%
%
%

A transition labelled by a \emph{system label} denotes the atomic execution of an action $a$ by a set of components in which $a$ is enabled.
%
More concretely,
for a system label $(S,a,R)\in\Lambda$,
$S$ represents the set of \emph{senders} and $R$ the set of \emph{receivers} that synchronise on an action $a \in \Sigma$.
Since, by definition of system labels, $S\cup R \neq \emptyset$, at least
one component participates in any system transition.
The transitions of a 
system capture 
all possible synchronisations of shared actions of its components,
even when only one component participates.
Given a system transition
$t = q\tr{(S,a,R)}q'$, we write $t.a$ for $a$, $t.S$ for $S$
and $t.R$ for $R$.
For ease of presentation, we assume in this paper that systems are closed.
This means that any system action $a \in \Sigma$
occurs in (at least) one of its components as an input action and in (at least) one of its components as an output action.
%
%

%


The \emph{projection} of an \FSys $\S = (\N,(A_i)_{i\in \N})$
to a product $p \in \sem{\fm}$ is the system
$\S^p = (\N,(A_i^{p})_{i\in \N})$. 

\begin{example}
\label{ex:fsys}
%
We consider an \FSys ${\S_\auth}$ with three components, two users and one server following the running example in~\Cref{sec:introduction}. Formally,
${\S_\auth} = (\mc{N}, (A_i)_{i\in\mc{N}})$, where
$\mc{N} =\set{u_1,u_2,s}$ are component names,
$A_{u_1},A_{u_2}$ are copies of the
\FCA $\user$ in~\cref{fig:fca-ex-u},
and $A_s$ is a copy of the \FCA $\server$ in \cref{fig:fca-ex-s}.


The system states are tuples $(p,q,r)$ 
with user states $p \in Q_{u_1}$ and  $q \in Q_{u_2}$, and server state $r\in Q_{s}$.
$\S_\auth$ has an initial state $(0,0,0)$, 
a total of 18~states ($3\times 3\times 2$),
actions $\com=\{\mi{join},\mi{leave},\mi{confirm}\}$, and a total of 142 system transitions.
Some of these (with action $\mi{leave}$) are depicted in \cref{fig:fsys}; the transitions \textcolor{black!50}{marked in grey} will be discarded based on synchronisation restrictions in the next section.



The projection of ${\S_\auth}$ to the valid product $\{\Fe{\lock}\}$, respecting 
the shared feature model \Fe{\lock \xor\unlock},
is the system $\S_\auth^\Fe{\lock} = 
(\mc{N}, \set{A_{u_1}^\Fe{\lock},A_{u_2}^\Fe{\lock},A_{s}^\Fe{\lock}})$,
such that
$A_{u_1}^{\Fe{\lock}},A_{u_2}^{\Fe{\lock}}$
are copies of 
$\user\proj_\Fe{\lock}$
in~\cref{fig:projected-fca-u1} and 
$A_{s}^{\Fe{\lock}}$ is a copy of 
$\server\proj_\Fe{\lock}$
in~\cref{fig:projected-fca-s1}.
Similarly, for product $\set{\Fe{\unlock}}$,
we get the projected system ${\S_\auth^{\Fe{\unlock}}} = 
(\mc{N}, \set{A_{u_1}^{\Fe{\unlock}},A_{u_2}^{\Fe{\unlock}},A_{s}^{\Fe{\unlock}}})$.
%
%
\qedx
\end{example} 
\begin{figure}[t]
\centering
\begin{tikzpicture}[align=center, every node/.style={scale=.9,font=\footnotesize}]
  \tikzstyle{fade}=[draw=black!40,color=black!40]
  \node[rloc](220){$2,2,0$};
  \node[rloc,right=4 of 220](020){$0,2,0$};
  \node[rloc,left=4 of 220](200){$2,0,0$};
  \node[rloc,above=0.5 of 220](000){$0,0,0$};
  \path[->] 
    (220)
       edge[bend left=6,above] node{\Label{u_2}{leave}{s}} (200)
      edge[fade,bend left=22,above] node{\Label{u_2}{leave}{}} (200)
      edge[bend right=6,above] node{\Label{u_1}{leave}{s}} (020)
      edge[fade,bend right=22,above] node{\Label{u_1}{leave}{}} (020)
      edge[     bend left=15,left]   node{\Label{u_1,u_2}{leave}{s}} (000)
      edge[fade,bend right=15,right] node{\Label{u_1,u_2}{leave}{}}  (000)
      edge[fade,loop,out=230,in=310,looseness=3,below]
        node[yshift=3pt,xshift=0pt](loop){\Label{}{leave}{s}} (220)
  ;
%
%
%
  %
  \pgfresetboundingbox
  \node[inner sep=0]at(000.north) {};
  \node[inner sep=0]at(loop.south) {};
  \node[inner sep=0]at(020.west) {};
  \node[inner sep=0]at(200.east) {};
\end{tikzpicture}
\caption{Some system transitions of $\S_\auth$}
\label{fig:fsys}
\end{figure}



\subsection{Featured Team Automata} 
\label{sub:featured_eta}




\emph{Featured team automata} (\FETA) are the key concept to model families
of teams. They are constructed over an \FSys \S together with a specification of synchronisation types expressing desirable synchronisation constraints. This section first formalises the latter and then \FETA as \FTSs.

A \emph{synchronisation type} $(s,r)\!\in\!\kw{Intv}{\times}\kw{Intv}$ is a pair of intervals $s$ and $r$ which de-\linebreak termine the number 
of senders and receivers that can participate in a communication.
Each interval 
is written $[\mi{min},\mi{max}]$,
with $\mi{min}\!\in\!\Nat$ and $\mi{max}\!\in\!\Nat \cup \{*\}$. 
We use $*$ to denote $0$ or more participants, and write 
$x\in[n,m]$ if $n\!\leq\!x\leq\!m$ and $x\in[n,*]$ if $x\geq n$.
For a system transition $t$, we define
$t \models (s,r)$ if
$|t.S|\in s \land |t.R|\in r$.

A \emph{featured synchronisation type specification} (\FSTS) 
over an \FSys 
\S, is a total function, 
$\fstype\!:\mshl{\sem{\fm}} \times \com \to \kw{Intv}\!\times\!\kw{Intv}$, mapping each product $p\in\sem{\fm}$ and 
action $a\in\com$ to a synchronisation type.
Thus, an \FSTS is parameterised by (valid) products and therefore
supports variability of synchronisation conditions.

\FSTSs are extensions of synchronisation type specifications (\STSs) in~\cite{BHK20b};
an \STS $\stype: \com \to \kw{Intv}\times\kw{Intv}$ maps actions to bounds of senders and receivers.
For any product $p\in \sem{\fm}$,
an \FSTS \fstype can be projected to an \STS $\fstype^p$
such that $\fstype^p(a) = \fstype(p,a)$ for all $a \in \Sigma$.

\begin{example}
\label{ex:fsts-auth}
The 
definition of $\fstype_\auth$ corresponds to an \FSTS 
for the \FSys ${\S_\auth}$ in \cref{ex:fsys}:
\vspace*{-.75\baselineskip}
\begin{align}
\fstype_{{\auth}}(p,\mi{confirm}) &= ([1,1],[1,1]) \text{ for }p\in\set{\{\Fe{\lock}\},\{\Fe{\unlock}\}}\label{eq:fst1}\\
\fstype_{{\auth}}(\set{\Fe{\lock}},a)    &= ([1,1],[1,1]) \text{ for } a\in \set{\mi{join},\mi{leave}}\label{eq:fst2} \\
\fstype_{{\auth}}(\set{\Fe{\unlock}},a)    &= ([1,*],[1,1]) \text{ for } a\in \set{\mi{join},\mi{leave}}\label{eq:fst3}
\end{align}

\vspace*{-.25\baselineskip}
\noindent Intuitively, independently of the selected product, users can receive \emph{confirm}ation
from the server in a \emph{one-to-one} fashion~(\ref{eq:fst1}). 
If secure authentication~\Fe{\lock} 
is required, 
one user can \emph{join}/\emph{leave} by synchronising exclusively 
with one server~(\ref{eq:fst2}).
If open access 
\Fe{\unlock} is required, 
multiple users can \emph{join}/\emph{leave} at the same time~(\ref{eq:fst3}).
\qedx
\end{example}
Given an \FSys $\S = (\N,(A_i)_{i\in \N})$ and an \FSTS \fstype over \S,
the \emph{featured team automaton} (\FETA) generated by $\S$ and \fstype, written $\fstype[\S]$, is the
\FTS $(Q,\ab I,\ab \Sigma,\ab E,\ab \mshl{F},\ab \mshl{\fm},\ab \mshl{\gamma})$
where $Q,I,\Sigma,\allowbreak E,\mshl{F}$, and $\mshl{\fm}$ 
are determined by $\S$.
It remains to construct the mapping $\gamma: E \to \mi{FE}(F)$,
which constrains system transitions by feature expressions.
The definition of $\gamma$ is derived from both the transition constraints
$\gamma_i$ of every $A_i$ and from \fstype.
It is motivated by the fact that a system transition $t = q\tr{(S,a,R)}q' \in E$
should be realisable for those products $p\in\sem{\fm}$ for which
both of the following conditions hold:

\begin{enumerate}
    \item 
    In each component $A_i$, with $i \in (S\cup R)$,
the local transition $q_i\tr{a}_{A_i} q_i'$ is realisable
for $p$. This means $p \models \hat{\gamma}(t)$, where 
$\hat{\gamma}(t) = \bigwedge_{i\in (S\cup R)} \gamma_i(q_i\tr{a}_{A_i} q_i')$.
\item
For any action 
$a \in \com$, 
the number of senders $|S|$ and receivers $|R|$ 
fits the synchronisation type $\fstype(p,a)$.
This means $p \models \Chi{P(\fstype,t)}$, where $\Chi{P(\fstype,t)}$
(cf.~\cref{sec:preliminaries}) 
is the feature expression characterising the set of products
$P(\fstype,t) = \{p\in\sem{\fm} \mid\allowbreak
t\models \fstype(p,t.a) \}$.
\end{enumerate}

\noindent In summary, for any $t  = q\tr{(S,a,R)}q' \in E$, we define
$\gamma(t) = \hat{\gamma}(t)\land \Chi{P(\fstype,t)}$.
 Note that, since $P(\fstype,t)$ is a subset of $\sem{\fm}$,
it holds
$\models \Chi{P(\fstype,t)} \rightarrow \fm$ and hence $\models \gamma(t) \rightarrow \fm$. In cases where $P(\fstype,t) = \sem{\fm}$, $\Chi{P(\fstype,t)}$ and  $\fm$ are equivalent and then we will often use
$\gamma(t) = \hat{\gamma}(t) \land \fm$.

\smallskip

Recall that an \FTS can be projected to products (as defined in~\cref{sec:preliminaries})
and therefore also the \FETA $\fstype[\S]$ can be projected 
to a valid product $p \in \sem{\fm}$ yielding the LTS
$\fstype[\S]^p$. Thus any \FETA $\fstype[\S]$
specifies a family of product~models.


\begin{example}\label{ex:feta}
Consider the \FSys ${\S_\auth}$ and the \FSTS
$\fstype_\auth$ from~\cref{ex:fsts-auth}, here and in the following examples simply called \fstype,
as well as the generated \FETA $\fstype[\S_\auth]$.
There are many system transitions, for instance
\begin{align*}
t_1 &=
(0,0,0) \tr{\mathit{(\{u_1,u_2\},\,join,\{s\})}} (2,2,0) \text{ and}\\
t_2 &=
(0,0,0) \tr{(\{u_1,u_2\},\,\mathit{join},\{s\})} (1,1,1).
\end{align*}
For $t_1$, we have 
$\hat{\gamma}(t_1) = 
	\bigwedge_{i\in\{1,2\}} \gamma_{u_i}(0 \tr{\mathit{join}}_{A_{u_i}} 2) \land
	\gamma_{s}(0 \tr{\mathit{join}}_{A_s} 0)=\Fe{\unlock}\land\unlock\land\unlock$. 
Since $\{\Fe{\unlock}\}$ is the only valid product $p$ such that
$t_1 \models \fstype(p,\mathit{join}) = ([1,*],[1,1])$---note that
only for open access 
more than one user can join simultaneously---we have $\Chi{P(\fstype,t_1)} = \Fe{\unlock} \land\ab \neg\,\Fe{\lock}$
(where $P(\fstype,t_1) = \{\{\Fe{\unlock}\}\}$).
Thus, in summary, $\gamma(t_1) = (\Fe{\unlock} \land\ab \Fe{\unlock} \land\ab \Fe{\unlock}) \land\ab (\Fe{\unlock} \land\ab \neg\,\Fe{\lock})$.
Hence $t_1$ can only be realised for open access. 

For $t_2$, we have $\hat{\gamma}(t_2) = \Fe{\lock} \land\ab \Fe{\lock} \land\ab \Fe{\lock}$
and $\Chi{P(\fstype,t_2)} = \Fe{\unlock} \land\ab \neg\,\Fe{\lock}$ as before,
since $\{\Fe{\unlock}\}$ is the only product $p$ such that
$t_2 \models \fstype(p,\mathit{join})$.
Therefore, $\gamma(t_2) = (\Fe{\lock} \land\ab \Fe{\lock} \land\ab \Fe{\lock})
\land\ab (\Fe{\unlock} \land\ab \neg\,\Fe{\lock})$, which reduces
to \Fe{\bot} and thus is not realisable by any product.

\cref{fig:feta} shows the full generated \FETA 
$\fstype[{\S_\auth}]$,
after removing all 
unreachable states and all 
non-realisable transitions $t$,
i.e.\ $\forall_{p\in\sem{\fm}} \cdot  p\not\models\gamma(t)$.
For each transition $t$ in~\cref{fig:feta} we present
$\gamma(t)$ as a conjunction of (a semantics-preserving simplification of)
$\hat{\gamma}(t)$ and an \underline{underlined} $\Chi{P(\fstype,t)}$
or $\fm = \Fe{\lock}\xor\Fe{\unlock}$ if $P(\fstype,t) = \sem{\fm}$.
The latter is the case in all transitions in which only one user participates.
If two users join or leave simultaneously, then
$\Chi{P(\fstype,t)}$ is always $\Fe{\unlock}\land\!\neg\,\Fe{\lock}$
as explained above for $t_1$. (Further reductions 
are possible for the conjoined $\gamma(t)$.)
\qedx
\end{example}

\begin{figure}[tb!]
\centering

\begin{tikzpicture}[align=center, every node/.style={scale=.9,font=\footnotesize}]
  \node[rloc](000){$0,0,0$};
  \node[rloc,right=5 of 000](020){$0,2,0$};
  \coordinate(mid1) at ($(000)!.5!(020)$);
  \node[rloc,above=1 of mid1](011){$0,1,1$};
  \node[rloc,below=2.5 of 000](200){$2,0,0$};
  \node[rloc,below=2.5 of 020](220){$2,2,0$};
  \coordinate(mid2) at ($(200)!.5!(220)$);
  \node[rloc,below=1 of mid2](211){$2,1,1$};
  \coordinate(mid3) at ($(020)!.5!(220)$);
  \node[rloc,right=1 of mid3](121){$1,2,1$};
  \coordinate(mid4) at ($(000)!.5!(200)$);
  \node[rloc,left=1 of mid4](101){$1,0,1$};
  \path[->] 
    (000) edge[bend left=7,above] 
          node{\trFe{\unlock[]\land\ul{\fm}}\Label{u_2}{join}{s}} 
          (020)
    (000.north) edge[bend left=10,above left]
          node[xshift=20pt,yshift=4pt]{\trFe{\lock[]\land\ul{\fm}}\Label{u_2}{join}{s}} 
          (011.west)
    (000.west) edge[bend right=10,above left]
          node{\trFe{\lock[]\land\ul{\fm}}\\\Label{u_1}{join}{s}} 
          (101.north)
    (000) edge[bend left=10,above, sloped] 
          node{\trFe{\unlock[]\land\ul{\fm}}\\\Label{u_1}{join}{s}}
          (200)
    (000) edge[bend left=5, above, sloped]
          node[xshift=6pt]{\trFe{\unlock[]\land \ul{\unlock[]\land\lnot\,\lock[]}}\\\Label{u_1,u_2}{join}{s}} 
          (220)
    (011.east) edge[bend left=10,above right]
          node[xshift=-20pt,yshift=4pt]{\trFe{\lock[]\land\ul{\fm}}\Label{s}{confirm}{u_2}} 
          (020.north)
    (020) edge[bend left=10,above] 
          node{\trFe{\top \land \ul{\fm}}\Label{u_2}{leave}{s}} 
          (000)
    (020.east) edge[bend left=10, above right] 
          node[yshift=5pt,xshift=-5pt]{\trFe{\lock[]\land\ul{\fm}}\\\Label{u_1}{join}{s}} 
          (121.north)
    (020) edge[bend left=10,above,sloped] 
          node{\trFe{\unlock[]\land \underline{\fm}}\\\Label{u_1}{join}{s}} 
          (220)
    (121.south) edge[bend left=10, right] 
          node[yshift=-8pt,xshift=-8pt]{\trFe{\lock[]\land\ul{\fm}}\\\Label{s}{confirm}{u_1}} 
          (220.east)
    (220) edge[bend left=12,below] 
          node{\trFe{\top \land \ul{\fm}}\Label{u_2}{leave}{s}} 
          (200)
    (200) edge[bend left=5,below] 
          node{\trFe{\unlock[]\land \underline{\fm}}\Label{u_2}{join}{s}} 
          (220)
    (200) edge[bend left=10,above,sloped] 
          node{\trFe{\top \land \ul{\fm}}\\\Label{u_1}{leave}{s}} 
          (000)
    (200.south) edge[bend right=10,below left]
          node[yshift=4pt]{\trFe{\lock[]\land\ul{\fm}}\\\Label{u_2}{join}{s}} 
          (211.west)
    (220) edge[bend left=5,below, sloped] 
          node[xshift=-6pt]{\trFe{\top \land \ul{\unlock[]\land\lnot\,\lock[]}}\\\Label{u_1,u_2}{leave}{s}} 
          (000)
    (220) edge[bend left=10,above,sloped] 
          node{\trFe{\top \land \ul{\fm}}\\\Label{u_1}{leave}{s}} 
          (020)
    (101.south) edge[bend right=10, left] 
          node[yshift=-8pt,xshift=8pt]{\trFe{\lock[]\land\ul{\fm}}\\\Label{s}{confirm}{u_1}} 
          (200.west)
    (211.east) edge[bend right=10,below right] 
          node[xshift=-5pt,yshift=4pt]{\trFe{\lock[]\land\ul{\fm}}\\\Label{s}{confirm}{u_2}} 
          (220.south);
  \coordinate[above=0.3 of 000,xshift=-3mm](initial);
  \draw[->,thick] (initial) -> (000);
  \node[left=4 of 011] (fm) {\shl{$\fm = \Fe{\lock[\textcolor{black!80}]\xor \unlock[\textcolor{black!80}]}$}};
\end{tikzpicture}
\vspace*{-\baselineskip}
\caption{
Generated \FETA $\fstype_\auth[\S_\auth]$ 
}
\label{fig:feta}
\end{figure}%

\subsection{\FETA versus \ETA}

\FETA are not simple extensions of
extended team automata (\ETA) introduced in~\cite{BHK20b}.
An \ETA is an LTS $\stype[\S]$ generated over a system \S of CA by 
an \STS \stype that explicitly filters the system transitions that satisfy the
synchronisation types determined by $\stype$. Concretely, an \ETA $\stype[\S]$ is the LTS $(Q,I,\Sigma,\stype[E])$, where
$Q,I,\Sigma$, and $E$ are induced by $\S$, and
$\stype[E] = \set{ t \in E \mid 
    t \models \stype(t.a)
   }$.


Observe that an \STS thus restricts the set of system transitions of a system $\S$, 
such that the \ETA $\stype[\S]$ has only a subset of the transitions of $\S$.
Instead,
an \FSTS 
and the local transition constraints of the components
$\A_i$ impose
transition constraints $\gamma$
on the system transitions of
an \FSys $\S$ such that
the \FETA $\fstype[\S]$ has all transitions of $\S$, but
appropriately constrained such that many of them will not be realisable anymore
for concrete products.


\smallskip
The next theorem shows that, for any valid product $p$,
the projection onto $p$ of the \FETA $\fstype[\S]$, generated over the \FSys 
$\S$ by the \FSTS $\fstype$, 
is the same as the \ETA over the projected system $\S^p$
generated by the projected \STS $\fstype^p$.
This result justifies the soundness of the definition
of a generated \FETA, in particular of its transition constraint $\gamma$.
It also shows that the diagram in \cref{fig:overview} commutes.


\begin{restatable}{theorem}{thmcommuting}\label{thm:commuting}
  Let \S be an \FSys 
  with feature model \fm,
  let \fstype be an \FSTS, and let $p\in \sem{\fm}$ be a valid product.
  Then: 
  \begin{align*}
     \fstype[\S]^p = \fstype^p[\S^p]. 
   \end{align*} 
\end{restatable}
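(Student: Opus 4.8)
The plan is to unfold both sides as (labelled) transition systems and check that their four ingredients --- state set, initial states, alphabet, and transition relation --- coincide. For the first three this is immediate: projecting each \FCA $A_i$ to $p$ leaves $Q_i$, $I_i$, $\Sigma_i$ and the input/output partition $\Sigma_i^?\uplus\Sigma_i^!$ untouched, so the projected system $\S^p$ induces exactly the same system states $Q$, initial states $I$, action set $\com$ and system-label set $\Lambda$ as $\S$; hence $\fstype[\S]^p$ and $\fstype^p[\S^p]$ agree on all of these. The entire content of the theorem is therefore the equality of transition sets $E\proj_p = \fstype^p[E^p]$, where $E$ is the system transition set induced by $\S$ and $E^p$ the one induced by $\S^p$.

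The first real step is an auxiliary observation relating $E^p$ to $E$, namely that forming induced system transitions commutes with projecting the components: a system transition $t = q\tr{(S,a,R)}q'$ lies in $E^p$ if and only if $t\in E$ and, for every $i\in t.S\cup t.R$, the participating local transition $q_i\tr{a}_{A_i}q_i'$ is realisable for $p$, i.e.\ $p\models\gamma_i(q_i\tr{a}_{A_i}q_i')$. Indeed, by definition $t\in E^p$ requires $q_i\tr{a}_{A_i^p}q_i'$ for $i\in t.S\cup t.R$ and $q_j=q_j'$ for $j\in\N\setminus(t.S\cup t.R)$; the second, ``stay-put'' clause is literally the one appearing in the definition of $E$ and is unaffected by projection, while $q_i\tr{a}_{A_i^p}q_i'$ unfolds, by definition of $E_i\proj_p$, to ``$q_i\tr{a}_{A_i}q_i'\in E_i$ and $p\models\gamma_i(q_i\tr{a}_{A_i}q_i')$''. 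Consequently $t\in E^p$ iff $t\in E$ and $p\models\hat\gamma(t)$, where $\hat\gamma(t)=\bigwedge_{i\in t.S\cup t.R}\gamma_i(q_i\tr{a}_{A_i}q_i')$ is exactly the conjunction used in the definition of the generated \FETA.

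The second step is a short chain of equivalences for membership in $E\proj_p$. By definition of projection of an \FTS and of the transition constraint of $\fstype[\S]$, we have $t\in E\proj_p$ iff $t\in E$ and $p\models\gamma(t)=\hat\gamma(t)\land\Chi{P(\fstype,t)}$, i.e.\ iff $t\in E$, $p\models\hat\gamma(t)$ and $p\models\Chi{P(\fstype,t)}$. For the last conjunct, $p\models\Chi{P(\fstype,t)}$ iff $p\in P(\fstype,t)$; since by hypothesis $p\in\sem{\fm}$, the defining condition of $P(\fstype,t)$ collapses to $t\models\fstype(p,t.a)$, that is $t\models\fstype^p(t.a)$. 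Combining this with the auxiliary step, $t\in E\proj_p$ iff $t\in E^p$ and $t\models\fstype^p(t.a)$, which is precisely the definition of $\fstype^p[E^p]$. This proves $E\proj_p=\fstype^p[E^p]$ and hence the theorem.

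I do not expect a genuine obstacle: the argument is propositional manipulation of the conjunctions defining $\gamma$. The only points requiring care are the auxiliary step --- that the induced system transitions of $\S^p$ are exactly those $t\in E$ whose participating local transitions are all realisable for $p$ --- and the bookkeeping on where the assumption $p\in\sem{\fm}$ is used, namely precisely to discard the ``$p\in\sem{\fm}$'' clause in the definition of $P(\fstype,t)$ and to identify $\fstype(p,-)$ with the projected \STS $\fstype^p$.
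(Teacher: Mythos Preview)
Your proposal is correct and follows essentially the same approach as the paper's proof: both note that states, initial states, and labels are shared, then reduce the theorem to equality of transition sets and establish this by unfolding $\gamma(t)=\hat\gamma(t)\land\Chi{P(\fstype,t)}$, splitting the conjunction, using $p\in\sem{\fm}$ to reduce $p\models\Chi{P(\fstype,t)}$ to $t\models\fstype^p(t.a)$, and recognising the remaining $p\models\hat\gamma(t)$ as precisely the condition that the local transitions survive projection. Your auxiliary step making explicit that $t\in E^p$ iff $t\in E$ and $p\models\hat\gamma(t)$ is the same observation the paper packs into its final ``unfolding $E$, $E[\S^p]$, $\hat\gamma$'' line.
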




\section{Receptiveness}
\label{sec:receptiveness}

As explained in \cref{{sec:featured_team_automata}} 
and formalised in~\cref{thm:commuting},
a \FETA $\fstype[\S]$ can be projected to a 
product $p\!\in\!\sem{\fm}$,
thus yielding an \ETA (i.e.\ a team) $\fstype[\S]^p\!=\!\fstype^p[\S^p]$. 
Any such \ETA describes the behaviour of a concrete system $\S^p$
whose components (the team members)
are coordinated by 
the synchronisation type specification $\stype\!=\!\fstype^p$.
This section analyses
com\-mu\-ni\-ca\-tion-safety of such families of \ETA.
Our 
aim is to provide criteria
on the level of \FETA that guarantee 
com\-mu\-ni\-ca\-tion-safety properties
for all \ETA obtained by 
projection (cf.~\Cref{sec:FETA-to-ETA}).

\subsection{Receptiveness for \ETA}\label{sec:receptiveness-ETA}

We focus on the property of receptiveness, which 
has been studied before
in the literature~\cite{AH01,CC02,LNW07}, mainly in the context of peer-to-peer communication.
An extension to multi-component communications was studied in~\cite{CK13} and in~\cite{BHK20b}, where also a notion of responsiveness not considered here was introduced.
The idea of receptiveness is as follows:
whenever, in a reachable state $q$ of an \ETA $\stype[\S]$, a group of components $J$ is (locally) enabled to perform an 
output action~$a$ 
such that its synchronous execution is in accordance
with the synchronisation type $\stype(a)$, 
we get a receptiveness requirement, written as $\rcp(J,a)@q$.
The \ETA is \emph{compliant} with this requirement if 
$J$ can find partners in the team which synchronise with the components in~$J$
by taking (receiving) $a$ as input.
If reception is immediate, we talk about receptiveness;
if the other components first perform some intermediate actions before accepting $a$, we talk about weak receptiveness.

Formally, receptiveness requirements, compliance, and receptiveness are defined as follows and illustrated in~\cref{ex:receptiveness}. We assume a given \ETA $\stype[\S] = (Q,\ab I,\ab \Sigma,\stype[E])$ generated by the \STS $\stype$ over a system $\S = (\N,(A_i)_{i\in \N})$ 
of \CA~$A_i$.

A \emph{receptiveness requirement} (\REQ) 
is an expression $\rcp(J,a)@q$,
where $q\in Q$ is a reachable state of $\stype[\S]$, $a \in \com$ is an 
action, and $\emptyset \neq J \subseteq \N$ is a set of component names such that $\forall_{j\in J}\cdot a \in \Sigma^!_j \land a~\bkw{en}_{A_j}@q_j$
and
$\stype(a)=(s,r) \Rightarrow |J|\in s \land 0\notin r$.
%
The last condition requires that i)~the number of components in $J$ fits the number of allowed senders according to the synchronisation type of $a$,
and ii)~at least one receiver must exist according to the synchronisation type of $a$.\footnote{Otherwise, the components in $J$ could simply output $a$ without reception.}
Hence our subsequent compliance and receptiveness notions, taken
from~\cite{BHK20b} and formalising the informal explanations above, depend strongly
on the synchronisation types of actions. 

\smallskip

The \ETA $\stype[\S]$ is \emph{compliant} with a \REQ
$\rcp(J,a)@q$ if the following holds:
\begin{align*}
& \exists_{R\neq\emptyset \text{ and } q' \in Q} \cdot q\tr{(J,a,R)}_{\stype[\S]} q'.
%
\intertext{\indent The \ETA $\stype[\S]$ is \emph{weakly compliant} with 
a \REQ $\rcp(J,a)@q$ if}
& \exists_{R\neq\emptyset \text{ and } \hat{q}, q' \in Q} \cdot 
		q \tr{\Lambda_{\setminus J}}\!{^*}_{\!\!\!\stype[\S]} \,
        \hat{q}\tr{(J,a,R)}_{\stype[\S]} q',
\end{align*}
where $\Lambda_{\setminus J}$ denotes the set of system labels in which no
component of $J$ participates. Indeed, only when state $\hat{q}$ is reached, the components of $J$ can actively get rid of their output.

\smallskip

The \ETA $\stype[\S]$ is \emph{(weakly) receptive} if it is (weakly) compliant with
\emph{all} \REQ{s} 
for $\stype[\S]$.

\begin{example}\label{ex:receptiveness}
%
Let \ETA $\fstype^{\Fe{\lock}}[{\S_\auth^{\Fe{\lock}}}]$ be generated by the \STS $\fstype^{\Fe{\lock}}$  (i.e.\ the projection 
of $\fstype$ from \cref{ex:feta} to $\{\Fe{\lock}\}$)
over the system ${\S_\auth^{\Fe{\lock}}} = 
(\mc{N},\set{A_{u_1}^{\Fe{\lock}},A_{u_2}^{\Fe{\lock}},A_{s}^{\Fe{\lock}}})$ of \cref{ex:fsys},
with $\fstype^{\Fe{\lock}}(\mi{join}) = \fstype^{\Fe{\lock}}(\mi{confirm}) =
\fstype^{\Fe{\lock}}(\mi{leave}) = ([1,1],[1,1])$.
%
In the initial global state $(0,0,0)$ both users are enabled to execute output action \mi{join}, but not simultaneously.
%
Hence, we get two \REQs
$\rcp(\{u_i\},\mi{join})@(0,0,0)$, one for each $i\in\{1,2\}$.
The \ETA $\fstype^{\Fe{\lock}}[{\S_\auth^{\Fe{\lock}}}]$ is compliant
with both \REQs because
$(0,0,0)\tr{(\{u_1\},\mi{join},\{s\})}_{\fstype^{\Fe{\lock}}[{\S_\auth^{\Fe{\lock}}}]} (1,0,1)$
and $(0,0,0)\tr{(\{u_2\},\mi{join},\{s\})}_{\fstype^{\Fe{\lock}}[{\S_\auth^{\Fe{\lock}}}]} (0,1,1)$.
%
Now assume that user $A_{u_1}^{\Fe{\lock}}$ joins. Then $\fstype^{\Fe{\lock}}[{\S_\auth^{\Fe{\lock}}}]$ ends up in state $(1,0,1)$,
where user $A_{u_2}^{\Fe{\lock}}$ may decide to join, i.e.\ there is
a \REQ $\rcp(\{u_2\},\mi{join})@(1,0,1)$.
But 
the server is not yet ready for $A_{u_2}^{\Fe{\lock}}$ as it first needs to send a confirmation to $A_{u_1}^{\Fe{\lock}}$. Therefore $\fstype^{\Fe{\lock}}[{\S_\auth^{\Fe{\lock}}}]$ is not compliant with $\rcp(\{u_2\},\mi{join})@(1,0,1)$, but it is weakly compliant with this \REQ.
We can show that the \ETA $\fstype^{\Fe{\lock}}[\S_\auth^{\Fe{\lock}}]$
is either compliant or weakly compliant with any \REQ and therefore it is weakly receptive.

Next, consider \ETA $\fstype^{\Fe{\unlock}}[{\S_\auth^{\Fe{\unlock}}}]$ generated by the \STS $\fstype^{\Fe{\unlock}}$ over the system ${\S_\auth^{\Fe{\unlock}}} = 
(\mc{N}, \set{A_{u_1}^{\Fe{\unlock}},A_{u_2}^{\Fe{\unlock}},A_{s}^{\Fe{\unlock}}})$ of \cref{ex:fsys}
with $\fstype^{\Fe{\unlock}}(\mi{join}) = \fstype^{\Fe{\unlock}}(\mi{leave}) = ([1,*],[1,1])$.
%
%
In state $(0,0,0)$, both users are enabled to output \mi{join}. Therefore, according to the sending multiplicity
$[1,*]$ of $\fstype^{\Fe{\unlock}}(\mi{join})$, there are three \REQs for that state, among which
$\rcp(\{u_1,u_2\},\mi{join})@(0,0,0)$. 
Note that $\fstype^{\Fe{\unlock}}[{\S_\auth^{\Fe{\unlock}}}]$ is compliant with this \REQ due to the team transition
$(0,0,0)\tr{(\{u_1,u_2\},\mi{join},\{s\})}_{\fstype^{\Fe{\unlock}}[{\S_\auth^{\Fe{\unlock}}}]} (1,1,1)$.
In fact, the \ETA $\fstype^{\Fe{\unlock}}[\S_\auth^{\Fe{\unlock}}]$
is compliant with all \REQs and therefore it is receptive.
\qedx
\end{example}

\subsection{Featured Receptiveness for \FETA}\label{sec:receptiveness-FETA}

We now turn to \FETA and discuss how the notions of receptiveness requirements, compliance, and receptiveness can be transferred to the feature level.
We assume a given \FETA $\fstype[\S] = (Q,\ab I,\ab \Sigma,\ab E,\ab \mshl{F},\ab \mshl{\fm},\ab \mshl{\gamma})$ generated by the \FSTS $\fstype$ over an \FSys $\S = (\N,\ab (A_i)_{i\in \N})$, 
with \FCA $A_i$. 
The crucial difference with the case of \ETA is that \FETA 
are based on syntactic specifications modelling families of teams.
Hence a \REQ 
$\rcp(J,a)@q$ formulated for an \ETA
cannot be formulated for a \FETA as it is.
Instead, it must take into account the valid products $p$ of the family for which the requirement is meaningful.
%
For this purpose, we propose to complement
$\rcp(J,a)@q$ by a syntactic application condition, resulting in a \emph{featured receptiveness requirement} (\FREQ), written as
$\trFe{\bkw{prod}(J,a,q)}\,\rcp(J,a)@q$.
Herein ${\bkw{prod}(J,a,q)}$ is a feature expression, which
characterises the set of valid products for which the \REQ
$\rcp(J,a)@q$
is applicable for $\fstype[\S]^p$.
The expression 
${\bkw{prod}(J,a,q)}= {\bkw{fe}(J,a,q)} \land \Chi{P(\fstype,J,a)}
\land \Chi{P(q)}$
consists of the following parts:

%
%
\begin{enumerate}
    \item ${\bkw{fe}(J,a,q)}$ = $\bigwedge_{j\in J} \bigvee \gamma_j(q_j\tr{a}_{A_j} q_j')$ combines the feature expressions of all transitions of components
    $\A_j$ ($j \in J$) with action $a$ and starting in the local state $q_j$.
    For any \FCA $\A_j$, the disjunction
    $\bigvee \gamma_j(q_j\tr{a}_{A_j} q_j')$
    ranges over the feature expressions of all local transitions of $A_j$
    starting in $q_j$ and labelled with $a$.
    Hence, if there are more such transitions it is sufficient if one of them is realised (in a projection of $\A_j$).
    Thus ${\bkw{fe}(J,a,q)}$ characterises those products~$p$ for which outgoing transitions with output $a$ are realisable in the local states $q_j$
    of $A_j$
    and hence enabled in $q_j$ in the projected component $\A_j^p$.
    \item
    $\Chi{P(\fstype,J,a)}$ is the feature expression which characterises
    (cf.~\cref{sec:preliminaries}) the set $P(\fstype,J,a) = \{p \in\sem{\fm} \mid \fstype(p,a){=}(s,r) \Rightarrow |J|\in s \land 0\notin r\}$.
    This is the set of all products $p$ such that $\fstype(p,a)$
    allows $|J|$ as number of senders and requires at least one receiver.
    \item
    $\Chi{P(q)}$ is the feature expression which characterises the set $P(q)$ of products for which state $q$ is reachable by transitions of $\fstype[\S]$ whose constraints are satisfied by $p$, i.e.\
    $P(q) = \{p \in\sem{\fm} \mid \exists_{q_0 \in I} \cdot~ 
q_0\tr{l_1}_{\fstype[\S]}q_1\tr{l_2}\dots\tr{l_n}_{\fstype[\S]}q_n = q
\text{ for some } n\geq0, \text{ and } p \models \gamma(q_{i-1}\tr{l_i}_{\fstype[\S]}q_i) \text{ for } i = 1,\ldots,n \}.$
\end{enumerate}

In summary, an \FREQ for $\fstype[\S]$
has the form
$\trFe{\bkw{prod}(J,a,q)}\,\ab\rcp(J,a)@q$,
where $q\in Q$ is a reachable state of $\fstype[\S]$, $a \in \com$, $\emptyset \neq J \subseteq \N$ is a set of component names
such that $\forall_{j\in J}\cdot a \in \Sigma^!_j \land a~\bkw{en}_{A_j}@q_j$,
and $\bkw{prod}(J,a,q)$ is a satisfiable feature expression as defined above. 
Note that
$\models {\bkw{prod}(J,a,q)} \rightarrow \fm$,
because $P(\fstype,J,a)$ in item~2 (and also $P(q)$ in item~3) is a subset of $\sem{\fm}$.

\smallskip

The following lemma provides a formal relation between
\REQs and \FREQs.



\begin{restatable}{lemma}{lemfeaturedreq}\label{lem:featured-req}
For all products $p$ it holds:
$\trFe{\bkw{prod}(J,a,q)}\,\ab\rcp(J,a)@q$ is an \FREQ for
$\fstype[\S]$ and $p \models \bkw{prod}(J,a,q)$
iff
$p \in \sem{\fm}$ and
$\rcp(J,a)@q$ is a \REQ for $\fstype[\S]^p$.
\end{restatable}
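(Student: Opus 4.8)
The plan is to prove both implications by unfolding the definitions of \FREQ and \REQ together with those of their ingredients $\bkw{fe}(J,a,q)$, $P(\fstype,J,a)$ and $P(q)$, using \Cref{thm:commuting} so that the \REQ conditions for $\fstype[\S]^p$ may be checked against the \ETA presentation $\fstype^p[\S^p]$; there the generating specification is $\fstype^p$ (so the synchronisation type $\stype(a)$ reads as $\fstype^p(a)=\fstype(p,a)$), the system is $\S^p$, and the components are the projected \CA $A_i^p$. First I would record three elementary observations used in both directions. (i) Projection only deletes transitions, so every transition of $A_j^p$ is a transition of $A_j$, and every transition of $\fstype[\S]^p$ is a transition of $\fstype[\S]$. (ii) By the definition of \FTS projection, the transitions of $\fstype[\S]^p$ are exactly those $t$ of $\fstype[\S]$ with $p\models\gamma(t)$; hence a run reaching $q$ in $\fstype[\S]^p$ is a run reaching $q$ in $\fstype[\S]$ all of whose transition constraints hold for $p$, so ``$q$ reachable in $\fstype[\S]^p$'' is equivalent to $p\in P(q)$ and implies ``$q$ reachable in $\fstype[\S]$''. (iii) $\models\bkw{prod}(J,a,q)\rightarrow\fm$, as already noted in the text.

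For the implication from left to right, assume $\trFe{\bkw{prod}(J,a,q)}\,\rcp(J,a)@q$ is an \FREQ for $\fstype[\S]$ and $p\models\bkw{prod}(J,a,q)$. Then $p\in\sem{\fm}$ by (iii), and it remains to verify the clauses making $\rcp(J,a)@q$ a \REQ for $\fstype^p[\S^p]$: $a\in\Sigma$, $\emptyset\ne J\subseteq\N$ and $a\in\Sigma^!_j$ for $j\in J$ are inherited verbatim from the \FREQ; $q$ is reachable in $\fstype[\S]^p$ since the conjunct $p\models\Chi{P(q)}$ gives $p\in P(q)$ (use (ii)); for each $j\in J$ the conjunct $p\models\bkw{fe}(J,a,q)$ gives $p\models\bigvee\gamma_j(q_j\tr{a}_{A_j}q_j')$, so some $a$-labelled transition out of $q_j$ in $A_j$ is realisable for $p$ and thus survives in $A_j^p$, i.e.\ $a~\bkw{en}_{A_j^p}@q_j$; and the multiplicity condition $\fstype(p,a)=(s,r)\Rightarrow|J|\in s\land 0\notin r$ is exactly $p\in P(\fstype,J,a)$, delivered by the conjunct $p\models\Chi{P(\fstype,J,a)}$. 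The converse implication is the mirror image of these equivalences: from $p\in\sem{\fm}$ and a \REQ $\rcp(J,a)@q$ for $\fstype^p[\S^p]$ I would first establish well-formedness of the candidate \FREQ ($q$ reachable in $\fstype[\S]$ and $a~\bkw{en}_{A_j}@q_j$ by (i) and (ii); the remaining clauses inherited), then show $p\models\bkw{prod}(J,a,q)$ conjunct by conjunct by reading the same equivalences backwards, and finally note that satisfiability of $\bkw{prod}(J,a,q)$ --- the last missing ingredient of well-formedness --- is witnessed by this very $p$.

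I do not expect a genuine obstacle: once \Cref{thm:commuting} is available, the argument is bookkeeping over the definitions. The one point that needs care is keeping straight which automaton enabledness refers to --- the featured component $A_j$ in a \FREQ versus its projection $A_j^p$ in a \REQ --- and checking that the disjunction in $\bkw{fe}(J,a,q)$ ranges over a nonempty index set, which is guaranteed by the clause $a~\bkw{en}_{A_j}@q_j$ of a well-formed \FREQ, so that $\bkw{fe}(J,a,q)$ is not vacuously $\falsum$ and genuinely captures that some $a$-transition out of $q_j$ is realisable for $p$.
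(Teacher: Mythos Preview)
Your proposal is correct and follows essentially the same approach as the paper's proof: both directions are handled by unfolding the three conjuncts of $\bkw{prod}(J,a,q)$ against the corresponding clauses of a \REQ for $\fstype^p[\S^p]$, invoking \Cref{thm:commuting} to identify $\fstype[\S]^p$ with $\fstype^p[\S^p]$. Your explicit remark that satisfiability of $\bkw{prod}(J,a,q)$ in the $(\Leftarrow)$ direction is witnessed by $p$ itself, and your attention to the nonemptiness of the disjunction in $\bkw{fe}(J,a,q)$, are details the paper leaves implicit but are entirely in line with its argument.
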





\begin{example}\label{ex:freq}
\cref{fig:feta-with-req} shows an excerpt of the \FETA $\fstype[\S_{\auth}]$ in \cref{fig:feta} depicting the \FREQ{s} for states $(0,0,0)$, $(0,1,1)$, and $(0,2,0)$. 
First note that an output of $\mi{join}$ is enabled at local state 0 in both components
$A_{u_1}$ and $A_{u_2}$. For $\rcp(\set{u_1},\mi{join})$ at state $(0,0,0)$ we get
$\bkw{fe}(\set{u_1},\mi{join}, (0,0,0)) = \lock\lor\unlock$
according to the constraints of both $\mi{join}$ transitions in $A_{u_1}$. 
Moreover, $P(\fstype,\set{u_1},\mi{join}) = \set{\set{\Fe{\lock}},\set{\Fe{\unlock}}} = \sem{\fm}$ and therefore $\Chi{P(\fstype,\set{u_1},\mi{join})}$ is equivalent to $\fm$. Also $P(0,0,0) = \set{\set{\Fe{\lock}},\set{\Fe{\unlock}}}$
since state $(0,0,0)$ is reachable in both products.
So $\bkw{prod}(\set{u_1},\mi{join}, (0,0,0)) =  (\Fe{\lock}\vee\Fe{\unlock})\land\fm \land\fm$, which reduces to
$\fm = \Fe{\lock}\xor\Fe{\unlock}$.
Thus we get the \FREQ $\trFe{\lock[]\xor\unlock[]}\,\rcp(\set{u_1},\mi{join})$ at $(0,0,0)$.
The case of $\{u_2\}$ is analogous.

Considering a possible simultaneous output of
$\mi{join}$ by $u_1$ and $u_2$ we get
$\bkw{fe}(\{u_1,u_2\},\ab\mi{join},\ab
(0,0,0)) =\ab  (\Fe{\lock}\vee\Fe{\unlock})
\vee (\Fe{\lock}\vee\Fe{\unlock}).$ 
And we get
$P(\fstype,\set{u_1,u_2},\mi{join}) = \set{\set{\Fe{\unlock}}}$, since only for
the product $\set{\Fe{\unlock}}$ a synchronisation of several users is allowed.
Therefore $\Chi{P(\fstype,\set{u_1,u_2},\mi{join})} = \Fe{\unlock}\land\neg\,\Fe{\lock}$.
As above, $\Chi{P(0,0,0)} = \fm$. 
Thus $\bkw{prod}(\set{u_1,u_2},\mi{join}, (0,0,0)) =  (\Fe{\lock}\vee\Fe{\unlock})\land (\Fe{\unlock}\land\neg\,\Fe{\lock})\land \fm$, which reduces to $\Fe{\unlock}\land\neg\,\Fe{\lock}$.
Hence we get the \FREQ $\trFe{\unlock[]\land\neg\,\lock[]}\,\rcp(\set{u_1,u_2},\mi{join})$ at $(0,0,0)$.

An interesting case is 
$\trFe{\lock[]\land\neg\,\unlock[]}\,\rcp(\set{u_1},\mi{join})$ at $(0,1,1)$.
Here $\bkw{fe}(\{u_1\},\ab\mi{join},\ab
(0,1,1))$ is again $\Fe{\lock}\vee\Fe{\unlock}$ and 
$\Chi{P(\fstype,\set{u_1},\mi{join})}$ is equivalent to $\fm$. 
However, the state $(0,1,1)$ is only reachable in the product $\set{\Fe{\lock}}$, i.e.\
$P(0,1,1) = \set{\set{\Fe{\lock}}}$.
Therefore, $\Chi{P(0,1,1)} = \Fe{\lock}\land\neg\,\Fe{\unlock}$.
In summary, 
$\bkw{prod}(\set{u_1,u_2},\mi{join}, (0,1,1)) =
(\lock\land\unlock)\land\ab \fm \land\ab (\Fe{\lock}\land\ab\neg\,\Fe{\unlock})$,
which reduces to $(\Fe{\lock}\land\ab\neg\,\Fe{\unlock})$. 
The other \FREQs are computed similarly.
\qedx
\end{example}

\begin{figure}[htb!]
  \centering

\begin{tikzpicture}[align=center, every node/.style={scale=.9,font=\footnotesize}]
  \node[rloc](000){$0,0,0$};
  \node[rloc,right=6 of 000](020){$0,2,0$};
  \coordinate(mid1) at ($(000)!.5!(020)$);
  \node[rloc,above=.8 of mid1](011){$0,1,1$};


  \node[left=1 of 000](200){$\cdots$};
  \node[right=1 of 020](220){$\cdots$};

  \path[->] 
    (000) edge[bend left=7,above] 
          node{\trFe{\unlock[]\land\ul{\fm}}\Label{u_2}{join}{s}} 
          (020)
          edge[bend left=10,above]
          node[yshift=7pt,xshift=6pt]{\trFe{\lock[]\land\ul{\fm}}\\\Label{u_2}{join}{s}} 
          (011)
          edge[bend left=0,dashed,<->] 
          (200)
    (011) edge[bend left=10,above]
          node[yshift=7pt,xshift=-6pt]{\trFe{\lock[]\land\ul{\fm}}\\\Label{s}{confirm}{u_2}} 
          (020)
    (020) edge[bend left=10,above] 
          node{\trFe{\top \land \ul{\fm}}\Label{u_2}{leave}{s}} 
          (000)
          edge[bend left=0,dashed,<->] 
          (220)
   ;
  \coordinate[above=0.3 of 000](initial);
  \draw[->,thick] (initial) -> (000);
    \node[above=1.9 of 020] (fm) {\qquad\qquad\qquad\shl{$\fm = \Fe{\lock[\textcolor{black!80}]\xor \unlock[\textcolor{black!80}]}$}};
    \node[req,above=0.8 of 000,xshift=-1cm](r000){
      \Req{
          \trFe{\lock[]\xor \unlock[]}~~\,\, \rcp(\set{u_1},\mi{join})\\
          \trFe{\lock[]\xor \unlock[]}~~\,\, \rcp(\set{u_2},\mi{join})\\
          \trFe{\unlock[]\land \neg\,\lock[]}~ \rcp(\{u_1,u_2\},\mi{join})  }
       {(s,\mi{join})_\Fe{\lock\xor \unlock}\lor(s,leave)_\Fe{\lock\xor \unlock}} 
    }; 
    \node[req,above=0.8 of 020,xshift=1cm](r020){
      \Req{
        \trFe{\lock[]\xor \unlock[]}~ \rcp(\set{u_1},\mi{join})\\
        \trFe{\lock[]\xor \unlock[]}~ \rcp(\set{u_2},\mi{leave})
      }{(s,\mi{join})_ \Fe{\lock[]\xor \unlock[]}\lor(s,\mi{leave})_ \Fe{\lock[]\xor \unlock[]}}
    };
    \node[req,above=0.5 of 011](r011){
      \Req{
        \trFe{\lock[]\land\neg\,\unlock[]}\,\, \rcp(\set{u_1},\mi{join})\\
        \trFe{\lock[]\land\neg\,\unlock[]}~ \rcp(\set{s},\mi{confirm})}{}
    };
    \path[-,dotted] 
    (000) edge[bend left=10] (r000)
    (020) edge[bend right=10] (r020)
    (011) edge[bend right=0] (r011);
\end{tikzpicture}
  \vspace*{-\baselineskip}
  \caption{Part of $\fstype[\S_\auth]$ from \cref{fig:feta} enriched with \REQs
  }
  \label{fig:feta-with-req}
\end{figure}


Next, we define featured compliance with an \FREQ. We use a logical formulation which, as we shall see, captures compliance for the whole family
of products.



\smallskip

The \FETA $\fstype[\S]$ is \emph{featured compliant} with an \FREQ
$\trFe{\psi}\,\rcp(J,a)@q$
if for some $n \geq 1$ and for $k=1,\ldots,n$ there exist
transitions
$t^k = q\tr{(J,a,R^k)}_{\fstype[\S]} q^k$ with $R^k\neq\emptyset$
such that
$\models \psi \rightarrow \bigvee_{k\in \{1,\ldots,n\}} \gamma(t^k).$

\smallskip

The definition of featured compliance can be unfolded by considering all $p \in \sem{\fm}$. This shows the relationship to the compliance notion for \ETA. 


\begin{restatable}{lemma}{lemfeaturedcompliance}\label{lem:featured-compliance}
Let $\trFe{\psi}\,\rcp(J,a)@q$ be an \FREQ for the \FETA $\fstype[\S]$. Then:
\\[-2mm]
$$
\left[\mwrap{\vspace{8mm}}\mwrap{\fstype[\S] \text{ is featured compliant}\\[5pt]
 \text{with }\trFe{\psi}\,\rcp(J,a)@q}
\right]
\!\mwrap{\bm\Leftrightarrow}\!
\left[\mwrap{
  \forall_{p \subseteq F \text{ with } p \models {\psi}}\cdot
  ~~
  \exists_{R\neq\emptyset \text{ and } q' \in Q} \cdot
  \\[5pt]
  q\tr{(J,a,R)}_{\fstype[\S]}q' \text{ and } p \models \gamma(q\tr{(J,a,R)}_{\fstype[\S]}q')
}
\right]
$$
\end{restatable}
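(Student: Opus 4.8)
The plan is to treat this as an essentially propositional fact, exploiting that all objects involved are finite. The key observation is that, for the fixed reachable state $q$, the fixed sender set $J$, and the fixed action $a$, the system transitions of shape $q\tr{(J,a,R)}q'$ with $R\neq\emptyset$ form a \emph{finite} set: they are a subset of the transition relation $E$ of $\S$, which is finite since $Q$ and $\N$ (hence $\Lambda$) are finite. Call this finite set of candidate transitions $\{t^1,\dots,t^N\}$. Recall also that $\models$ here is semantic entailment over the finite feature set $F$, i.e.\ $\models\varphi\rightarrow\varphi'$ means every product $p\subseteq F$ with $p\models\varphi$ also satisfies $p\models\varphi'$.

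For the forward direction ($\bm\Rightarrow$), I assume $\fstype[\S]$ is featured compliant with $\trFe{\psi}\,\rcp(J,a)@q$, so there are transitions $t^k=q\tr{(J,a,R^k)}q^k$ for $k=1,\dots,n$ (some $n\geq1$), each with $R^k\neq\emptyset$, such that $\models\psi\rightarrow\bigvee_{k}\gamma(t^k)$. Given any $p\subseteq F$ with $p\models\psi$, this entailment gives $p\models\bigvee_k\gamma(t^k)$, hence $p\models\gamma(t^{k_0})$ for some $k_0$. Setting $R:=R^{k_0}$ and $q':=q^{k_0}$ yields a transition $q\tr{(J,a,R)}q'$ with $R\neq\emptyset$ and $p\models\gamma(q\tr{(J,a,R)}q')$, which is exactly the right-hand side. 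This direction uses no finiteness, only instantiation of the entailment.

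For the backward direction ($\bm\Leftarrow$), I assume that for every $p\subseteq F$ with $p\models\psi$ there is a witness transition $q\tr{(J,a,R_p)}q'_p$ with $R_p\neq\emptyset$ and $p\models\gamma(q\tr{(J,a,R_p)}q'_p)$. Each such witness lies in the finite candidate set $\{t^1,\dots,t^N\}$, so $p\models\gamma(t^k)$ for some $k\in\{1,\dots,N\}$, i.e.\ $p\models\bigvee_{k=1}^{N}\gamma(t^k)$. As this holds for all $p\models\psi$, we get $\models\psi\rightarrow\bigvee_{k=1}^{N}\gamma(t^k)$; and $N\geq1$ because $\psi=\bkw{prod}(J,a,q)$ is satisfiable (being the application condition of an \FREQ), so some $p\models\psi$ exists and the candidate set is nonempty. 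Hence $t^1,\dots,t^N$ witness featured compliance.

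The argument is routine; the only point needing a little care is the finiteness step in the $\bm\Leftarrow$ direction, where \emph{all} products satisfying $\psi$ must be covered by a \emph{single} finite disjunction of transition guards — precisely what finiteness of $E$ (hence of the candidate set) provides — together with the $n\geq1$ edge case, handled by satisfiability of $\psi$. One could alternatively route the argument through \Cref{lem:featured-req} (to talk about $\rcp(J,a)@q$ being a \REQ for $\fstype[\S]^p$), but this is unnecessary here since everything is manipulated at the level of feature expressions.
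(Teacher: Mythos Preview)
Your proof is correct. The forward direction is identical to the paper's. In the backward direction, however, you and the paper invoke different finiteness arguments: the paper enumerates the finitely many \emph{products} $\{p_1,\dots,p_n\}$ satisfying $\psi$ (finite because $F$ is finite) and chooses one witness transition per product, whereas you enumerate the finitely many \emph{candidate transitions} of shape $q\tr{(J,a,R)}q'$ with $R\neq\emptyset$ (finite because $Q$ and $\N$, hence $E$, are finite) and take the disjunction over all of them. Both routes are valid; the paper's indexing by products is perhaps more natural given the statement's quantifier structure, while yours has the minor advantage that the witness set is canonical (no choice function needed). Your handling of the $n\geq1$ edge case via satisfiability of $\psi$ together with the right-hand side assumption is correct and matches the paper's reasoning.
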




The next definition generalises featured compliance to featured weak compliance.
It is a technical but straightforward extension that transfers the concept of weak receptiveness to the featured level.

\smallskip
The \FETA $\fstype[\S]$ is \emph{featured weakly compliant} with an \FREQ
$\trFe{\psi}\,\rcp(J,a)@q$
if for some $n \geq 1$ and for $k=1,\ldots,n$ there exist
sequences $\sigma^k$ of transitions
$$\sigma^k = q^k_0 \tr{(S^k_0,a^k_0,R^k_0)}_{\fstype[\S]} q^k_1\,\, \cdots \,\,
q^k_{m_k} \tr{(S^k_{{m_k}},a,R^k_{{m_k}})}_{\fstype[\S]} q^k_{m_{k}+1}$$
with 
$q^k_0 = q, m_k \geq 0$,
$(S^k_i\cup R^k_i) \cap J = \emptyset$ for $i= 0,\ldots, m_{k-1}$,
$R^k_i\neq\emptyset$ for $i=0,\ldots, m_k$, and $S^k_{{m_k}} = J$
such that 
  $$\models \psi \rightarrow \bigvee_{k\in \{1,\ldots,n\}}
\bigwedge_{i\in \{0,\ldots,m_k\}}\gamma(q^k_i
\tr{(S^k_i,a^k_i,R^k_i)}_{\fstype[\S]} q^k_{i+1}).$$

%
We remark that~\cref{lem:featured-compliance} can be extended in a straightforward way
to characterise featured weak compliance.

\smallskip

The \FETA $\fstype[\S]$ is \emph{featured (weakly) receptive} if it is  featured (weakly) compliant with all
\FREQ{s} for $\fstype[\S]$.

\begin{example}\label{ex:freceptive}
We consider some \FREQs for the \FETA $\fstype[\S_\auth]$
as depicted in~\cref{fig:feta-with-req}. 
The first \FREQ is
$\trFe{\lock[]\xor\unlock[]}\,\rcp(\set{u_1},\mi{join})\ab@(0,0,0)$.
As we can see in~\cref{fig:feta}, there are two transitions, say $t_1, t_2$, in $\fstype[\S_@]$ with source state $(0,0,0)$ and label
$(\set{u_1},\mi{join},\set{s})$,
such that $\gamma(t_1)=\lock\land\fm$ and $\gamma(t_2)=\unlock\land\fm.$  
Hence, for checking featured compliance with this \FREQ we have to prove:
$$\models \Fe{\lock\xor\unlock} \rightarrow \Fe{(\lock\land\fm)\vee(\unlock\land\fm)}.$$
But this is easy, since the conclusion
is equivalent to $\fm=\Fe{\lock\xor\unlock}$. To achieve this it is essential
to have the disjunction of $\gamma(t_1)$ and $\gamma(t_2)$ in the conclusion.

As a second \FREQ we consider
$\trFe{\unlock[]\land\neg\,\lock[]}\,\rcp(\set{u_1,u_2},\mi{join})@(0,0,0)$.
As we can see in~\cref{fig:feta}, there is one transition in $\fstype[\S_@]$ with source state $(0,0,0)$ and label
$(\set{u_1,u_2},\mi{join},\set{s})$,
which has the transition constraint $\Fe{\unlock\land\neg\,\lock}$.
Featured compliance with this \FREQ holds trivially, since
$$\models \Fe{\unlock}\land\neg\,\Fe{\lock} \rightarrow \Fe{\unlock\land\neg\,\lock}.$$
As a last example, consider the \FREQ
$\trFe{\lock[]\land\neg\,\unlock[]}\,\rcp(\set{u_1},\mi{join})@(0,1,1)$.
In state $(0,1,1)$, no transition with action $\mi{join}$ can be performed
by the \FETA $\fstype[\S_\auth]$. Therefore featured compliance
does not hold. However, featured weak compliance holds for the following reasons.
We take $n=1$ (in the definition of featured weak compliance)
and select, in~\cref{fig:feta}, the transition sequence
\begin{align*}
(0,1,1) \tr{\trFe{\lock[] \land \underline{\fm}}\mathit{(\{s\},\,confirm,\{u_2\})}} (0,2,0)
\tr{\trFe{\lock[] \land \underline{\fm}}(\{u_1\},\,\mathit{join},\{s\})} (1,2,1).
\end{align*}
Then, we get the following proof obligation (conjoining the constraints of the two consecutive transitions in the conclusion):
$\models (\Fe{\lock}\land\neg\,\Fe{\unlock}) \rightarrow
\Fe{(\lock \land \fm) \land (\lock \land \fm)}.$
Obviously, this holds since the conclusion reduces to $\Fe{\lock}\land\neg\,\Fe{\unlock}$.
We can show that the \FETA $\fstype[\S_\auth]$
is either featured compliant or featured weakly compliant with any \FREQ and therefore it is featured weakly receptive.
\qedx
\end{example}

\subsection{From Featured Receptiveness to Receptiveness}
\label{sec:FETA-to-ETA}

This section presents our main result.
We show that instead of checking product-wise each member of a family
of product configurations for (weak) receptiveness,
it is sufficient to verify once
featured (weak) receptiveness for the family model.
We can even show that this technique is not only sound but also complete
in the sense, that if we disprove featured (weak) receptiveness
on the family level, then there will be a product for which the
projection is not (weakly) receptive.



\begin{restatable}
{theorem}{thmfrectorec}
\label{thm:frec-to-rec}
Let $S$ be an \FSys 
with feature model $\fm$, let $\fstype$ be an \FSTS, and let $\fstype[S]$ be its generated \FETA. Then:
\begin{align*}
  \Big[\fstype[\S]\text{ is featured (weakly) receptive} \Big]
  \bm{\Leftrightarrow}
  \Big[\forall_{p \in \sem{\fm}} \cdot \fstype[\S]^p\text{ is (weakly) receptive}\Big]
\end{align*}
\end{restatable}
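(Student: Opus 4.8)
The plan is to chain together the two bridging results already in hand --- \Cref{lem:featured-req}, which matches the \FREQ{s} of $\fstype[\S]$ with the \REQs of the projections $\fstype[\S]^p$, and \Cref{lem:featured-compliance} together with its announced extension to featured \emph{weak} compliance, which unfolds featured (weak) compliance with one \FREQ into an ordinary (weak) compliance statement for each product --- and to read everything through \Cref{thm:commuting}, which lets us treat $\fstype[\S]^p$ as the \ETA $\fstype^p[\S^p]$ over which receptiveness is defined. The only extra ingredient is the definition of the projection of an \FTS: a labelled transition belongs to $\fstype[\S]^p$ exactly when it is a system transition $t$ of $\fstype[\S]$ with $p \models \gamma(t)$; the same remark turns a transition sequence of $\fstype[\S]^p$ into a transition sequence of $\fstype[\S]$ all of whose constraints hold for $p$, and conversely.

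For ($\Rightarrow$), assume $\fstype[\S]$ is featured (weakly) receptive, fix $p \in \sem{\fm}$, and take an arbitrary \REQ $\rcp(J,a)@q$ for $\fstype[\S]^p$. By \Cref{lem:featured-req}, $\trFe{\bkw{prod}(J,a,q)}\,\rcp(J,a)@q$ is an \FREQ for $\fstype[\S]$ and $p \models \bkw{prod}(J,a,q)$; by hypothesis $\fstype[\S]$ is featured (weakly) compliant with it. Instantiating the universal quantifier over products in \Cref{lem:featured-compliance} (resp.\ its weak variant) with this particular $p$ --- permissible precisely because $p \models \bkw{prod}(J,a,q)$ --- delivers the required transition $q\tr{(J,a,R)}q'$ (resp.\ the required run of $\Lambda_{\setminus J}$-labelled transitions followed by such a transition) with all constraints satisfied by $p$, hence living in $\fstype[\S]^p$. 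So $\fstype[\S]^p$ is (weakly) compliant with $\rcp(J,a)@q$; since the \REQ was arbitrary, $\fstype[\S]^p$ is (weakly) receptive.

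For ($\Leftarrow$), assume every projection is (weakly) receptive, and take an arbitrary \FREQ $\trFe{\psi}\,\rcp(J,a)@q$ for $\fstype[\S]$, where $\psi = \bkw{prod}(J,a,q)$ and $\models \psi\rightarrow\fm$. By \Cref{lem:featured-compliance} (resp.\ its weak variant) it suffices to exhibit, for each $p \subseteq F$ with $p\models\psi$, the required transition (resp.\ sequence) of $\fstype[\S]$ with constraints satisfied by $p$. Fix such a $p$; then $p \in \sem{\fm}$, and by \Cref{lem:featured-req}, $\rcp(J,a)@q$ is a \REQ for $\fstype[\S]^p$. Since $\fstype[\S]^p$ is (weakly) receptive it is (weakly) compliant with this \REQ, which yields a transition $q\tr{(J,a,R)}q'$ of $\fstype[\S]^p$ (resp.\ the appropriate sequence); by the projection definition this is a transition (resp.\ sequence) of $\fstype[\S]$ whose constraint(s) $p$ satisfies, as wanted. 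Hence $\fstype[\S]$ is featured (weakly) compliant with the \FREQ, and --- the \FREQ being arbitrary --- featured (weakly) receptive.

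The argument for immediate compliance is purely mechanical; the part to be careful with is the weak case, where one must check that the extended form of \Cref{lem:featured-compliance} genuinely aligns the witnessing transition sequences $\sigma^k$ of featured weak compliance with the witnesses of ordinary weak compliance (a run of $\Lambda_{\setminus J}$-labelled transitions from $q$ to some $\hat q$, followed by a step $\hat q\tr{(J,a,R)}q'$), so that the side conditions --- no component of $J$ participates before the last step, every intermediate receiver set is nonempty, and the final sender set equals $J$ --- transfer intact across projection in both directions. A secondary, essentially bookkeeping, point is that in ($\Leftarrow$) the per-product witnesses coming from the various $\fstype[\S]^p$ must be collected into the single finite disjunction appearing in the definition of featured (weak) compliance; this is unproblematic since $F$, hence $\sem{\fm}$, and the transition relation of $\fstype[\S]$ are finite, and in any case it is already subsumed by the statement of \Cref{lem:featured-compliance}.
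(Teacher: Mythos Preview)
Your proposal is correct and follows essentially the same route as the paper's proof: both directions are obtained by chaining \Cref{lem:featured-req} with \Cref{lem:featured-compliance} (and its announced weak extension), using the definition of \FTS projection to move transitions back and forth between $\fstype[\S]$ and $\fstype[\S]^p$. The paper likewise treats the weak case by appeal to the straightforward generalisation of \Cref{lem:featured-compliance} rather than spelling it out, so your caveats about transferring the side conditions on the witnessing sequences and about collecting the per-product witnesses into a finite disjunction are exactly the points the paper leaves implicit.
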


\begin{example}
In~\cref{ex:freceptive} we showed that the \FETA $\fstype[\S_\auth]$
is featured weakly receptive.
Therefore, by applying~\cref{thm:frec-to-rec}, we know
that for both products $\{\Fe{\lock}\}$ and $\{\Fe{\unlock}\}$, 
the \ETA $\fstype^{\Fe{\lock}}[{\S_\auth^{\Fe{\lock}}}]$
and $\fstype^{\Fe{\unlock}}[{\S_\auth^{\Fe{\unlock}}}]$
are weakly receptive (a result which we checked product-wise
in~\cref{ex:receptiveness}).
\qedx
\end{example}

\paragraph{\rm\bf Note on Complexity}


Note that an \FREQ for $\fstype[\S]$ necessarily involves a syntactic application condition, which is a feature expression that characterises the set of valid products $p$ for which the featureless \REQ is applicable for $\fstype[\S]^p$. Part of this feature expression is a characterisation $\Chi{P(q)}$ of the set $P(q)$ of products for which state $q$ is reachable by transitions of $\fstype[\S]$ whose constraints are satisfied by $p$, 
which requires a reachability check for $q$.
This may seem computationally expensive. However, it has been shown that static analysis of properties of \FTSs that concern the reachability of states and transitions in valid products (LTSs) is feasible in reasonable time even for \FTSs of considerable size, by reducing the analysis to SAT solving~\cite{BDLMP21}. 
In fact, while SAT solving is NP-complete, SAT solvers are effectively used for static analysis of feature models with hundreds of thousands of clauses and tens of thousands of variables~\cite{MWC09,LGCR15}. 
Finally, we note that the results presented in this section are still sound, 
but not complete, without the aforementioned characterisation
of $P(q)$.



\section{Tool Support}
\label{sec:implementation}

\begin{figure}[htb!]
  \centering
  \begin{tabular}[b]{@{}l@{}}
    \numsshot[.484]{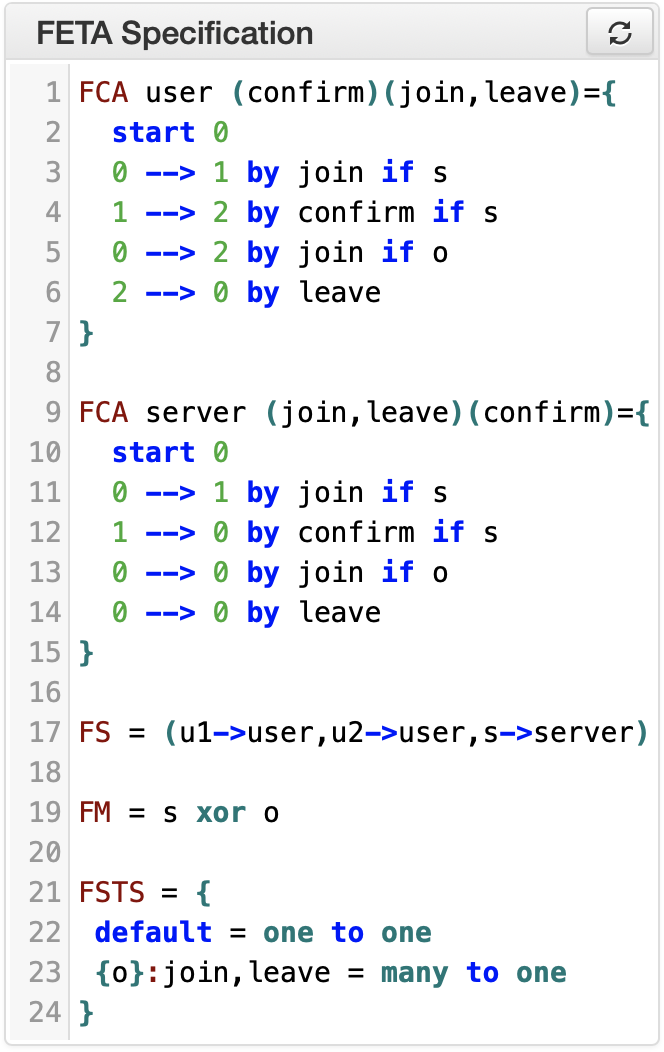}{1}
  \end{tabular}
  \begin{tabular}[b]{@{}l@{}}
    \numsshot[.47]{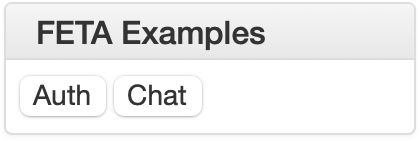}{3}\\[22pt]\numsshot[.47]{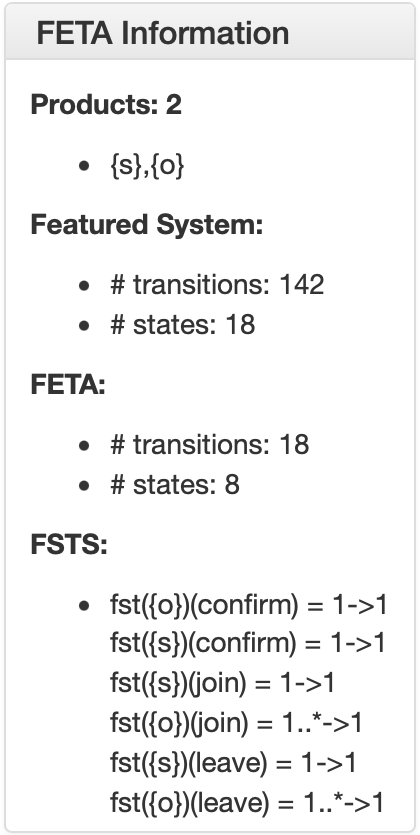}{5}
  \end{tabular}%
  \begin{tabular}[b]{@{}l@{}}
    \numsshot[.47]{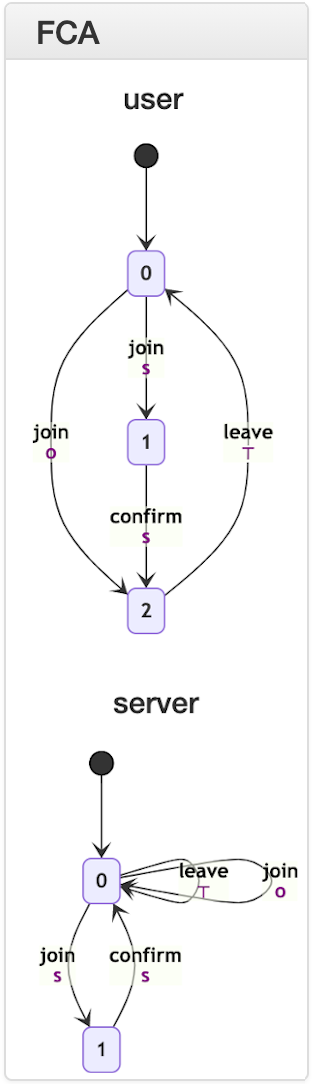}{4}
  \end{tabular}
  \\
  \begin{tabular}[b]{@{}l@{}}
    \numsshot[.37]{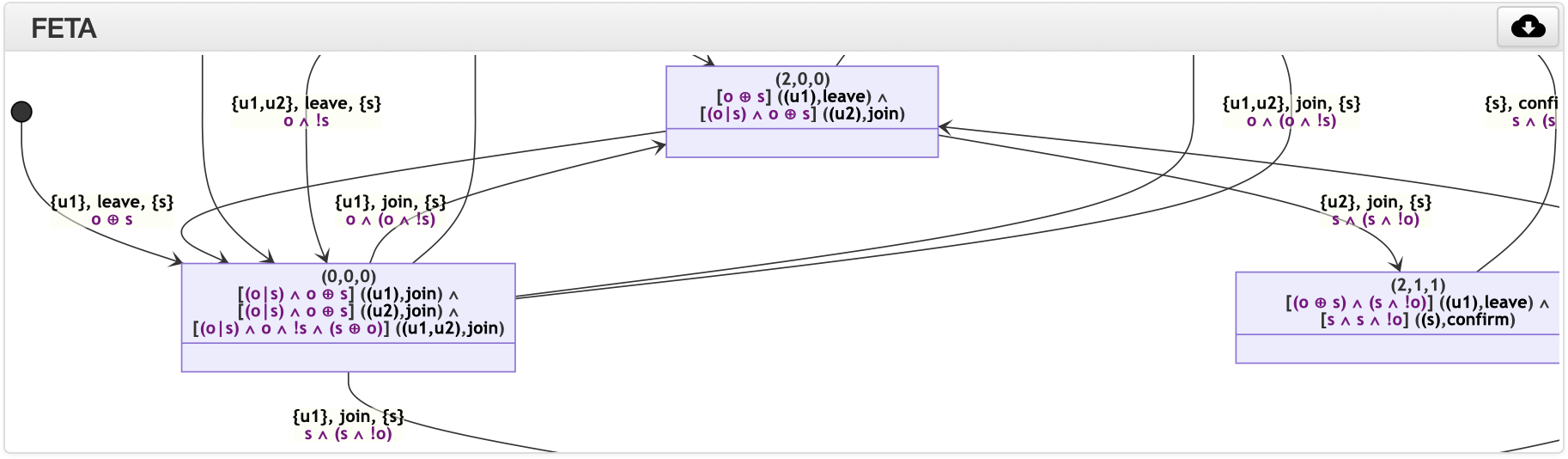}{2}
  \end{tabular}
  \caption{Screenshots of the widgets in the online tool for \FETA} 
  \label{fig:tool}
\end{figure}

We implemented a prototypical tool to specify and analyse 
\FETA. This requires to 
define an \FSys over a set of \FCA, a shared feature model, and an \FSTS.
The tool can be used online 
and downloaded at 
\url{https://github.com/arcalab/team-a}.
The interface is organised by 5 widgets 
(illustrated in \cref{fig:tool}):
{\small\cn{1}}~%
a text editor to specify a \FETA, using a dedicated domain-specific language; 
{\small\cn{2}}~%
an \FTS view of the \FETA, together with the \FREQs generated automatically for each state, similar to \cref{fig:feta-with-req}; 
{\small\cn{3}}~%
a set of example \FETA; 
{\small\cn{4}}~%
a view of each individual \FCA, similar to \cref{fig:fca-ex}; and 
{\small\cn{5}}~%
some statistics of the various models,  
 including the number of states, transitions, features, and products.

%
The tool is written in Scala and it uses the Play Framework
to generate an 
interactive website using a client-server architecture. The Scala 
code is compiled into JavaScript using \texttt{Scala.js}
to run on the client side, 
and into JVM binaries that run on the server side. 
The server side is currently needed to use an off-the-shelf Java library, \texttt{Sat4j}, 
to find all products that satisfy a feature model.

\section{Conclusion} 
\label{sec:conclusion_and_discussion}

We introduced featured team automata to specify and analyse systems of 
featured component automata
and to explore composition and communication-safety. We showed that family-based analysis of receptiveness (no message loss) suffices to study receptiveness of product configurations. We implemented our theory in a prototypical tool. 


In the future, we intend to extend our theory
 to address i)~responsiveness, i.e.\ no indefinite waiting for input,
 and ii)~compositionality, i.e.\ extend \FETA to composition of systems (that behaves well with \FSTSs) and investigate conditions under which communication safety is preserved by \FETA composition. Moreover, we will further develop the tool
 and analyse the practical impact of \FETA on the basis of larger case studies.
 This involves a thorough study of the efficiency of featured receptiveness checking compared to product-wise checking of receptiveness.
Finally, we aim to implement a family-based analysis algorithm that computes, for a given \FETA, the set of all product configurations that yield communication-safe systems.





\paragraph{\rm\bf Acknowledgments.}

\begin{small}
Ter Beek received funding from the MIUR PRIN 2017FTXR7S project IT MaTTerS (Methods and Tools for Trustworthy Smart Systems). 
Cledou and Proen\c{c}a received funding from the ERDF -- European Regional Development Fund through the Operational Programme for Competitiveness and Internationalisation -- COMPETE 2020 Programme (project DaVinci, POCI-01-0145-FEDER-029946) and by National Funds through the Portuguese funding agency, FCT -- Fundação para a Ciência e a Tecnologia.
Proen\c{c}a also received 
National Funds through FCT/MCTES, within the CISTER Research Unit (UID/CEC/04234);
by the Norte Portugal Regional Operational Programme -- NORTE 2020 (project REASSURE, NORTE-01-0145-FEDER-028550) under the Portugal 2020 Partnership Agreement, through ERDF the FCT;
and European Funds through the ECSEL Joint Undertaking (JU) under grant agreement No~876852 (project VALU3S).




\end{small}

\bibliographystyle{splncs04}
\bibliography{biblio}

\newpage
\appendix

\section{Proofs}
\label{sec:proofs}


\thmcommuting* 
\begin{proof}
  Both \LTSs share the same structure from \S: the state space~$Q$, the initial states~$I$, and the system labels.
  Furthermore, the transitions of both systems are a subset of the induced set of system transitions $E$ from \S.
  It remains to show that $E_1=E_2$, where
    $E_1\subseteq E$ is the set of transitions of the first LTS
    and
    $E_2\subseteq E$ the one of the second. 
  
  Let
    $\gamma$ be the transition constraint of $\fstype[\S]$ and,
    for each $i\in \N$ of \S, $\gamma_i$ be the local transition constraint
    of 
    $A_i$.
  Also recall that
  $\hat{\gamma}(q\tr{(S,a,R)}q') = \bigwedge_{i\in (S\cup R)} \gamma_i(q_i\tr{(S,a,R)}_{A_i} q_i')$,
  $S = (\N,(A_i)_{i\in \N})$,
  and
  $S^p = (\N,(A^p_i)_{i\in \N})$.
  The set of transitions of \S and from $\S^p$ are, respectively:
  {\small
  \\--
  $E = \{q\tr{(S,a,R)}q' \mid 
    \forall_{i\in(S\cup R)} \cdot q_i\tr{a}_{A_i}\!q'_i, ~
    \forall_{j\in\N\setminus (S\cup R)} \cdot q_j{=}q_j'
    \}$
  \\--
  $E[S^p]\mkern-1mu=\mkern-1mu\{q\tr{(S,a,R)}q' \mid 
    \forall_{i\in(S\cup R)} \cdot [q_i\tr{a}_{A_i}\!q'_i\mkern2mu\land\mkern2mu
      p \models \gamma_i(q_i\tr{a}_{A_i} q'_i)],\,
    \forall_{j\in\N\setminus (S\cup R)} \cdot q_j{=}q_j'
    \}$
  }\\[.25em]
  Finally, we show that $E_1=E_2$. By definition, $E_2=\set{t\in E[\S^p]\mid t\models \fstype^p}$, and:
  \newcommand{\nxte}[1][=]{\\#1 &~}
  \newcommand{\nxt}[2][=]{\\\textcolor{gray}{\{#2\}}~#1&~}
  \newcommand{\nxtt}[1]{\nxt{\textit{#1}}}
  %
%
  \begin{align*}
    &~ E_1 
        \nxtt{by def.}
    \set{t\in E\mid p\models \gamma(t)}
        \nxte
    \set{ t\in E \mid p\models [\hat{\gamma}(t) \land \Chi{P(\fstype,t)}]]}
        \nxte
    \set{ t\in E \mid [p\models \hat{\gamma}(t)] \land
                     [p\models \Chi{P(\fstype,t)}]}
        \nxte
    \set{t\in E\mid [p\models \hat{\gamma}(t)] \land
                        [p\in \sem{\fm}] \land [t\models \fstype^p]}
        \nxt{p\in\sem\fm}
    \set{t\in E\mid [p\models \hat{\gamma}(t)] \land
                        [t\models \fstype^p]}
        \nxtt{unfolding $E$, $E[\S^p]$, $\hat\gamma$}
    E_2
   \qedhere\end{align*}
\end{proof}

\lemfeaturedreq*
\begin{proof}
($\bm\Rightarrow)$:
By definition of $\bkw{prod}(J,a,q)$
(in particular of $\Chi{P(\fstype,J,a)}$ and $\Chi{P(q)}$),
$\models {\bkw{prod}(J,a,q)} \rightarrow \fm$.
Hence, $p \models \bkw{prod}(J,a,q)$ implies $p \in \sem{\fm}$.  
By assumption, we know that $q$ is a reachable state of $\fstype[\S]$ and
$p\models \Chi{P(q)}$, i.e.\ $p \in P(q)$. 
Therefore, $q$ is a reachable state of $\fstype[\S]^p.$
Moreover, we know that
$\emptyset \neq J \subseteq \N$ and $\forall_{j\in J}\cdot a \in \Sigma^!_j \land\ab a~\bkw{en}_{A_j}@q_j$. Since, by assumption,
$p \models \bkw{fe}(J,a,q)$, there exists in each $A_j$ a transition $t_j$ starting in $q_j$ with action $a$ such that $p\models \gamma_j(t_j)$.
Hence, by definition of projection, $a~\bkw{en}_{A^p_j}@q_j$ for all $j \in J$.
Finally, we know that $p \models \Chi{P(\fstype,J,a)}$. 
Therefore, $\fstype(p,a)=(s,r) \Rightarrow |J|\in s \land 0\notin r$.
Since $\fstype^p(a) =\fstype(p,a)$ we get $\fstype^p(a)=(s,r) \Rightarrow |J|\in s \land 0\notin r$.
Thus, $\rcp(J,a)@q$ is a \REQ for the \ETA $\fstype^p[\S^p]$
which, by~\cref{thm:commuting}, coincides with $\fstype[\S]^p$.  

\smallskip

($\bm\Leftarrow$): Let $p\in\sem{\fm}$ and $\rcp(J,a)@q$ be a \REQ for $\fstype[\S]^p$
which is the \ETA $\fstype^p[\S^p]$.
Then $q$ is a reachable state of $\fstype[\S]^p$
and therefore, by definition of \FTS projection,
$q$ is reachable in $\fstype[\S]$ by transitions whose constraints are satisfied by $p$. Hence, $q$ is a reachable state of $\fstype[\S]$ which belongs to
$P(q)$, i.e.\ $p\models \Chi{P(q)}$.

Furthermore, by definition of \REQs we know that $\emptyset \neq J \subseteq \N$ such that $\forall_{j\in J}\cdot a \in \Sigma^!_j \land a~\bkw{en}_{A^p_j}@q_j$
and
$\fstype^p(a)=(s,r) \Rightarrow |J|\in s \land 0\notin r$.
From $a~\bkw{en}_{A^p_j}@q_j$ for all $j\in J$ it follows, by definition of projection, that
in each $A_j$ there exists a transition $t_j$ starting in $q_j$ with action $a$ such that $p\models \gamma_j(t_j)$.
Therefore, $a~\bkw{en}_{A_j}@q_j$ for all $j\in J$ and $p \models \bkw{fe}(J,a,q)$.

Since $\fstype^p(a)=(s,r) \Rightarrow |J|\in s \land 0\notin r$ and $\fstype^p(a) =\fstype(p,a)$, we get that $p\in P(\fstype,J,a)$.
Thus $p \models \Chi{P(\fstype,J,a)}$.

In summary, $p \models \bkw{prod}(J,a,q)$ and
$\trFe{\bkw{prod}(J,a,q)}\,\ab\rcp(J,a)@q$ is an \FREQ for
$\fstype[\S]$.
\qed
\end{proof}

\lemfeaturedcompliance*
\begin{proof}
$(\bm\Rightarrow)$:
Since $\fstype[\S]$ is featured compliant,
there exists $n \geq 1$ and for $k=1,\ldots,n$
there exist transitions
$t^k = q\tr{(J,a,R^k)}_{\fstype[\S]} q^k$ with $R^k\neq\emptyset$
such that
$\models \psi \rightarrow \bigvee_{k\in \{1,\ldots,n\}} \gamma(t^k).$
Let $p\subseteq F$ be a product such that 
$p \models \psi$.
Then $p \models \bigvee_{k\in \{1,\ldots,n\}} \gamma(t^k).$
Hence, there exists $i\in \{1,\ldots,n\},\ab R^i \neq \emptyset$
and $q^i \in Q$ such that
$q\tr{(J,a,R^i)}_{\fstype[\S]} q^i$ and 
$p \models  \gamma(q\tr{(J,a,R^i)}_{\fstype[\S]}q^i)$.

\smallskip

$(\bm\Leftarrow)$:
Let $\{p_1,\ldots,p_n\}$ be the set of products which satisfy $\psi$.
This set is finite, since $F$ is finite, and nonempty, since $\psi$ is satisfiable.
For each $k=1,\ldots,n$ we can choose, by assumption,
a transition
$t^k = q\tr{(J,a,R^k)}_{\fstype[\S]} q^k$ with $R^k\neq\emptyset$
such that $p_k \models \gamma(t^k).$
Therefore $\models \psi \rightarrow \bigvee_{k\in \{1,\ldots,n\}} \gamma(t^k)$
and thus $\fstype[\S]$ is featured compliant with
$\trFe{\psi}\,\rcp(J,a)@q$. 
\qed
\end{proof}

\thmfrectorec* 
\begin{proof}
We perform the proof for receptiveness. The case of weak receptiveness is more technical but can be proven along the same lines (by using sequences of transitions and a straightforward generalisation of~\cref{lem:featured-compliance}). 

\smallskip

$(\bm\Rightarrow)$:
Let $p \in \sem{\fm}$ and let $\rcp(J,a)@q$ be an arbitrary \REQ 
for $\fstype[\S]^p$. Then, by~\cref{lem:featured-req}, 
$\trFe{\bkw{prod}(J,a,q)}\,\rcp(J,a)@q$ is an \FREQ 
for $\fstype[\S]$ and
$p \models {\bkw{prod}(J,a,q)}$.
By assumption, $\fstype[\S]$ is featured receptive
and thus featured compliant with this \FREQ.
Therefore, by ~\cref{lem:featured-compliance},
there exist
$R\neq\emptyset$ and $q' \in Q$ such that
$q\tr{(J,a,R)}_{\fstype[\S]} q'$ and $p \models \gamma(q\tr{(J,a,R)}_{\fstype[\S]}q')$.
Hence, by definition of \FETA projection,
$q\tr{(J,a,R)}_{\fstype[\S]^p} q'$.
This shows that $\fstype[\S]^p$ is compliant with $\rcp(J,a)@q$.
Since $\rcp(J,a)@q$ was chosen arbitrarily, 
$\fstype[\S]^p$ is receptive.

\smallskip

$(\bm\Leftarrow)$:
Let $\trFe{\bkw{prod}(J,a,q)}\,\rcp(J,a)@q$ be an arbitrary \FREQ 
for $\fstype[\S]$.
Let $p$ be an arbitrary product such that $p \models {\bkw{prod}(J,a,q)}$.
Then, by~\cref{lem:featured-req},
$p \in \sem{\fm}$ and $\rcp(J,a)@q$ is a \REQ for $\fstype[\S]^p$.
By assumption, $\fstype[\S]^p$ is compliant with this \REQ.
Hence, there exist $R\neq\emptyset$ and $q' \in Q$ such that
$q\tr{(J,a,R)}_{\fstype^p[\S]} q'.$
By definition of projection,
$q\tr{(J,a,R)}_{\fstype[\S]} q'$ and
$p \models \gamma(q\tr{(J,a,R)}_{\fstype[\S]}q')$.
Then, by~\cref{lem:featured-compliance},
$\fstype[\S]$ is featured compliant with $\trFe{\bkw{prod}(J,a,q)}\,\rcp(J,a)@q$.
Since the \FREQ was chosen arbitrarily, $\fstype[\S]$ is receptive. 
\qed
\end{proof}

\end{document}